\newtheorem{theorem}{Theorem}
\newtheorem{lemma}{Lemma}
\theoremstyle{definition}
\newtheorem{remark}{Remark}
\newcommand{\argmin}{\mathop{\mathrm{argmin}}}
\newcommand{\st}{\mathop{\mathrm{subject\,\,to}}}
\def\R{\mathbb{R}}
\def\half{\frac{1}{2}}
\def\col{\mathrm{col}}
\def\nul{\mathrm{null}}
\def\sign{\mathrm{sign}}
\def\supp{\mathrm{supp}}
\def\Id{\mathrm{Id}}
\def\setint{\mathrm{int}}
\title{Dykstra's Algorithm, ADMM, and Coordinate Descent: Connections,
  Insights, and Extensions}
\author{
Ryan J. Tibshirani \\
Statistics Department and Department of Machine Learning \\
Carnegie Mellon University \\
Pittsburgh, PA 15213 \\
\texttt{ryantibs@stat.cmu.edu} 
}
\begin{document}
\maketitle

\begin{abstract}
We study connections between Dykstra's algorithm for projecting onto
an intersection of convex sets, the augmented Lagrangian method of
multipliers or ADMM, and block coordinate descent.  We prove that  
coordinate descent for a regularized regression problem, in which the
(separable) penalty functions are seminorms, is exactly equivalent to   
Dykstra's algorithm applied to the dual problem.  ADMM on the dual
problem is also seen to be equivalent, in the special case of two
sets, with one 
being a linear subspace. These connections, aside from being
interesting in their own right, suggest new ways of analyzing and
extending coordinate descent. For example, from existing convergence
theory on Dykstra's algorithm over polyhedra, we discern that
coordinate descent for the lasso problem converges at an
(asymptotically) linear rate.  We also develop two parallel versions 
of coordinate descent, based on the Dykstra and ADMM connections.        
\end{abstract}

\section{Introduction}
\label{sec:intro}

In this paper, we study two seemingly unrelated but closely connected  
convex optimization problems, and associated algorithms.
The first is the {\it best approximation problem}: given
closed, convex sets $C_1,\ldots,C_d \subseteq \R^n$ and $y \in
\R^n$, we seek the point in the intersection $C_1 \cap \cdots \cap
C_d$ closest to $y$, and solve
\begin{equation}
\label{eq:bap}
\min_{u \in \R^n} \; \|y-u\|_2^2 
\quad \st \quad u \in C_1 \cap \cdots \cap C_d.
\end{equation}
The second problem is the {\it regularized regression problem}: given
a response $y \in \R^n$ and predictors $X \in \R^{n\times
  p}$, and a block decomposition $X_i \in \R^{n \times p_i}$,
$i=1,\ldots,d$ of the columns of $X$ (i.e., these could be columns, or
groups of columns), we build a working linear model by
applying blockwise regularization over the coefficients, and solve 
\begin{equation}
\label{eq:reg}
\min_{w \in \R^p} \; \half\|y-Xw\|_2^2 + \sum_{i=1}^d h_i(w_i), 
\end{equation}
where $h_i:\R^{p_i} \to \R$, $i=1,\ldots,d$ are convex functions, and
we write $w_i \in \R^{p_i}$, $i=1,\ldots,d$ for the appropriate block  
decomposition of a coefficient vector $w \in \R^p$ (so that
\smash{$Xw = \sum_{i=1}^d X_iw_i$}).

Two well-studied algorithms for problems \eqref{eq:bap},
\eqref{eq:reg} are {\it Dykstra's algorithm}
\citep{dykstra1983algorithm,boyle1986method} 
and {\it (block) coordinate descent} \citep{warga1963minimizing, 
bertsekas1989parallel,tseng1990dual}, respectively.  The jumping-off
point for our work in this paper is the following fact: {\it these two 
  algorithms are equivalent for solving \eqref{eq:bap} and
  \eqref{eq:reg}}.  That is, for a 
particular relationship between the sets $C_1,\ldots,C_d$ and
penalty functions $h_1,\ldots,h_d$, the problems \eqref{eq:bap} and
\eqref{eq:reg} are duals of each other, and Dykstra's algorithm on the
primal problem \eqref{eq:bap} is exactly the same as coordinate
descent on the dual problem \eqref{eq:reg}.  We provide details in
Section \ref{sec:connections}.    

This equivalence between Dykstra's algorithm and
coordinate descent can be essentially found in the optimization
literature, dating back to the late 1980s, and possibly
earlier.  (We say ``essentially'' here because, to our knowledge, this
equivalence has not been stated for a general regression matrix $X$,  
and only in the special case $X=I$; but, in truth, the extension
to a general matrix $X$ is fairly straightforward.)  Though this 
equivalence has been cited and discussed in various ways 
over the years, we feel that it is not as well-known as it
should be, especially in light of the recent resurgence
of interest in coordinate descent methods.  We revisit
the connection between Dykstra's algorithm and coordinate descent, and  
draw further connections to a third method---the {\it augmented 
Lagrangian method of multipliers} or ADMM
\citep{glowinski1975approximation,gabay1976dual}---that has 
also received a great deal of attention recently.  While these basic
connections are interesting in their own right, 
they also have important implications for analyzing and extending  
coordinate descent. Below we give a summary of our contributions.   

\begin{enumerate}
\item We prove in Section
  \ref{sec:connections}  (under a particular configuration 
  relating $C_1,\ldots,C_d$ to seminorms $h_1,\ldots,h_d$)
  that Dykstra's algorithm for \eqref{eq:bap} is equivalent to block 
  coordinate 
  descent for \eqref{eq:reg}.  (This is a mild
  generalization of the previously known connection when $X=I$.)   
\item We also show in Section \ref{sec:connections} that ADMM is 
  closely connected to Dykstra's algorithm, in that ADMM for
  \eqref{eq:bap}, when $d=2$ and $C_1$ is a linear subspace, matches
  Dykstra's algorithm. 
\item Leveraging existing results on the convergence of Dykstra's
  algorithm for an intersection of halfspaces, we establish in Section 
  \ref{sec:lasso} that coordinate descent for the lasso problem has an
  (asymptotically) linear rate of convergence, regardless of the
  dimensions of $X$ 
  (i.e., without assumptions about strong convexity of the problem).
  We derive two different explicit forms for the error constant, which
  shed light onto how correlations among the predictor variables
  affect the speed of convergence.
\item Appealing to parallel versions of Dykstra's algorithm and
  ADMM, we present in Section \ref{sec:parallel} two parallel
  versions of coordinate descent (each guaranteed to converge in full 
  generality).
\item We extend in Section \ref{sec:extensions} the equivalence
  between coordinate descent and Dykstra's algorithm to the case
  of nonquadratic loss in \eqref{eq:reg}, i.e., non-Euclidean
  projection in \eqref{eq:bap}. This leads to a
  Dykstra-based parallel version of coordinate descent for (separably 
  regularized) problems with nonquadratic loss, and we also derive an
  alternative ADMM-based parallel version of coordinate descent for
  the same class of problems.
\end{enumerate}

\section{Preliminaries and connections}
\label{sec:connections}

\paragraph{Dykstra's algorithm.} 
Dykstra's algorithm was first proposed by
\citet{dykstra1983algorithm}, and was extended to Hilbert spaces by 
\citet{boyle1986method}.  Since these seminal papers, a number of
works have analyzed and extended Dykstra's algorithm 
in various interesting ways. We will reference many of these
works in the coming sections, when we discuss connections
between Dykstra's  
algorithm and other methods; for other developments, see the  
comprehensive books \citet{deutsch2001best,bauschke2011convex} and  
review article \citet{bauschke2013projection}. 

Dykstra's algorithm for the best approximation problem \eqref{eq:bap}
can be described as follows.  We initialize \smash{$u^{(0)}=y$},
\smash{$z^{(-d+1)}=\cdots=z^{(0)}=0$}, and then repeat, for 
$k=1,2,3,\ldots$: 
\begin{equation}
\begin{aligned}
\label{eq:dyk}
u^{(k)} &= P_{C_{[k]}} (u^{(k-1)}+z^{(k-d)}), \\
z^{(k)} &= u^{(k-1)} + z^{(k-d)}-u^{(k)},
\end{aligned}
\end{equation}
where \smash{$P_C(x)=\argmin_{c \in C} \|x-c\|_2^2$} denotes the 
(Euclidean) projection of $x$ onto a closed, convex set $C$, and 
$[\cdot]$ denotes the modulo operator taking values in 
$\{1,\ldots,d\}$. What differentiates Dykstra's   
algorithm from the classical {\it alternating projections method} of 
\citet{vonneumann1950functional,halperin1962product} is the sequence
of (what we may call) dual variables \smash{$z^{(k)}$},
$k=1,2,3,\ldots$.  These track, in a cyclic fashion, the residuals
from projecting onto $C_1,\ldots,C_d$.  The simpler alternating  
projections method will always converge to a feasible point in
$C_1 \cap \cdots \cap C_d$, but will not necessarily
converge to the solution in \eqref{eq:bap} unless $C_1,\ldots,C_d$ are 
subspaces (in which case alternating projections and Dykstra's
algorithm coincide).  Meanwhile, Dykstra's algorithm converges in
general (for any closed, convex sets $C_1,\ldots,C_d$ with
nonempty intersection, see, e.g., 
\citet{boyle1986method,han1988successive,gaffke1989cyclic}).       
We note that Dykstra's algorithm \eqref{eq:dyk}
can be rewritten in a different form, which will be
helpful for future comparisons. First, we initialize
\smash{$u_d^{(0)}=y$}, \smash{$z_1^{(0)}=\cdots=z_d^{(0)}=0$}, and
then repeat, for $k=1,2,3,\ldots$:  
\begin{equation}
\begin{aligned}
\label{eq:dyk2}
&u_0^{(k)} = u_d^{(k-1)}, \\
&\begin{rcases*}
u_i^{(k)} = P_{C_i} (u_{i-1}^{(k)} + z_i^{(k-1)}), & \\
z_i^{(k)} = u_{i-1}^{(k)} + z_i^{(k-1)} - u_i^{(k)}, &
\end{rcases*}
\quad \text{for $i=1,\ldots,d$}. 
\end{aligned}
\end{equation}

\paragraph{Coordinate descent.}
Coordinate descent methods have a long history in
optimization, and have been studied and discussed in early papers 
and books such as \citet{warga1963minimizing,ortega1970iterative,  
luenberger1973introduction,auslender1976optimisation,
bertsekas1989parallel}, though coordinate descent was still likely
in use much earlier. 
(Of course, for solving linear systems, coordinate descent reduces to
Gauss-Seidel iterations, which dates back to the 1800s.)  
Some key papers analyzing the convergence of coordinate descent
methods are \citet{tseng1987relaxation,tseng1990dual,
luo1992convergence,luo1993convergence,tseng2001convergence}.  In the
last 10 or 15 years, a considerable interest in coordinate descent has 
developed across the optimization community. With the   
flurry of recent work, it would be difficult to give a thorough 
account of the recent progress on the topic. To give just a few
examples, recent developments include finite-time
(nonasymptotic) convergence rates for coordinate descent, and exciting 
extensions such as accelerated, parallel, and distributed versions of
coordinate descent.  We refer to \citet{wright2015coordinate}, an
excellent survey that describes this recent progress.  

In (block) coordinate descent\footnote{To be precise, this is {\it
  cyclic} coordinate descent, where {\it exact} minimization is
performed along each block of coordinates. {\it Randomized} versions
of this algorithm have recently become popular,
%(where the next block over which minimization is performed is
%selected at random, rather than being chosen cyclically); 
as have {\it inexact} or {\it proximal} versions.
%have also been studied (where a 
%quadratic approximation to the smooth part of the criterion is made
%before performing each block minimization).  
While these variants are
interesting, they are not the focus of our paper.} for
\eqref{eq:reg}, we initialize say \smash{$w^{(0)}=0$}, and repeat, for 
$k=1,2,3,\ldots$: 
\begin{equation}
\label{eq:cd}
w_i^{(k)} = \argmin_{w_i \in \R^{p_i}} \; 
\half \bigg\|y - \sum_{j < i} X_j w_j^{(k)} - \sum_{j > i} X_j
w_j^{(k-1)} - X_i w_i \bigg\|_2^2 + h_i (w_i),
\quad i=1,\ldots,d. 
\end{equation}
We assume here and throughout that 
$X_i \in \R^{n \times p_i}$, $i=1,\ldots,d$ each have full 
column rank so that the updates in \eqref{eq:cd} are uniquely defined 
(this is used for convenience, and is not a strong 
assumption; note that it places no restriction on the dimensionality
of the full problem in \eqref{eq:reg}, i.e., we could still have $X
\in \R^{n\times p}$ with $p \gg n$). 
The precise form of these updates, of course,
depends on the penalty functions $h_i$, $i=1,\ldots,d$.  Suppose that  
$h_i$, $i=1,\ldots,n$ are seminorms, which we can express in the
general form \smash{$h_i(v) =\max_{d \in D_i} \langle d,v \rangle$},
where $D_i \subseteq \R^{p_i}$ is a closed, convex set containing 0, 
for $i=1,\ldots,d$.  Suppose also that
\smash{$C_i=(X_i^T)^{-1}(D_i)=\{v \in \R^n: X_i^T v \in D_i\}$}, the 
inverse image of $D_i$ under the linear mapping $X_i^T$, for
$i=1,\ldots,d$. Then, perhaps surprisingly, it turns out that
the coordinate descent  
iterations \eqref{eq:cd} are exactly the same as the Dykstra
iterations \eqref{eq:dyk2}, via duality.  The key relationship is
extracted as a lemma below, for future reference, and then the formal 
equivalence is stated.  Proofs of these results, as with all
results in this paper, are given in the supplement.  

\begin{lemma}
\label{lem:dyk_cd_update}
Assume that $X_i \in \R^{n\times p_i}$ has full column rank and 
\smash{$h_i(v) =\max_{d \in D_i} \langle d,v \rangle$} for a
closed, convex set $D_i \subseteq \R^{p_i}$ containing 0.  Then for
\smash{$C_i=(X_i^T)^{-1}(D_i) \subseteq \R^n$} and any $b \in  
\R^n$,  
$$
\hat{w}_i = \argmin_{w_i \in \R^{p_i}} \; 
\half \|b - X_iw_i\|_2^2 + h_i(w_i) \iff
X_i \hat{w}_i = (\Id-P_{C_i})(b).
$$
where $\Id(\cdot)$ denotes the identity mapping.
\end{lemma}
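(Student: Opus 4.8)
The plan is to work directly with the subgradient optimality condition for the penalized regression and to match it, term for term, against the variational inequality characterizing the Euclidean projection onto $C_i$. The conceptual backbone is Lagrangian duality: rewriting $h_i$ through its support-function representation $h_i = \max_{d \in D_i}\langle d,\cdot\rangle$ turns the penalized regression into a saddle-point problem whose inner variable ranges over $D_i$. But the cleanest route sidesteps invoking a full duality theorem and instead verifies that the first-order conditions of the two sides coincide.

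First I would record the optimality condition for $\hat{w}_i = \argmin_{w_i} \half\|b - X_i w_i\|_2^2 + h_i(w_i)$. Since this objective is convex and finite-valued, $\hat{w}_i$ is optimal if and only if $0 \in -X_i^T(b - X_i\hat{w}_i) + \partial h_i(\hat{w}_i)$. The relevant fact about seminorms written as support functions is that $\partial h_i(w) = \argmax_{d \in D_i}\langle d,w\rangle$, the face of $D_i$ exposed by $w$ (finiteness of $h_i$ forces $D_i$ bounded, hence compact, so the max is attained). Writing $r = b - X_i\hat{w}_i$ for the residual, the optimality condition says there exists $\hat{d}\in D_i$ with $X_i^T r = \hat{d}$, where $\hat{d}$ attains the maximum defining $h_i(\hat{w}_i)$. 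In particular $X_i^T r \in D_i$, so by definition $r \in C_i = (X_i^T)^{-1}(D_i)$.

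The crux is then to show $r = P_{C_i}(b)$, after which $X_i\hat{w}_i = b - r = (\Id - P_{C_i})(b)$ is immediate. I would verify the projection's variational inequality: since $r \in C_i$, it suffices to check $\langle b - r,\, c - r\rangle \le 0$ for every $c \in C_i$. Substituting $b - r = X_i\hat{w}_i$ and moving $X_i$ across the inner product, this reads $\langle \hat{w}_i,\, X_i^T c - X_i^T r\rangle \le 0$. Now $c \in C_i$ means $X_i^T c \in D_i$, while $X_i^T r = \hat{d}$ maximizes $\langle\cdot,\hat{w}_i\rangle$ over $D_i$; hence $\langle X_i^T c,\hat{w}_i\rangle \le \langle X_i^T r,\hat{w}_i\rangle$, which is exactly the required inequality. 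This establishes the forward implication.

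For the converse, and to upgrade the statement to an equivalence, I would appeal to uniqueness on both sides. The map $w_i \mapsto X_i w_i$ is injective because $X_i$ has full column rank, so the primal minimizer $\hat{w}_i$ is unique and is pinned down by the value of $X_i\hat{w}_i$; likewise $P_{C_i}(b)$ is unique. Since the forward direction shows the true minimizer satisfies $X_i\hat{w}_i = (\Id - P_{C_i})(b)$, any $\hat{w}_i$ obeying $X_i\hat{w}_i = (\Id - P_{C_i})(b)$ must equal it, giving the reverse direction. The main obstacle is really the single inner-product manipulation in the third step: recognizing that the exposed-face description of $\partial h_i$ is precisely what makes the projection inequality hold. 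The only other point to confirm is that the subgradient sum rule applies, which is unproblematic here since $h_i$ is finite-valued and the quadratic term is everywhere differentiable.
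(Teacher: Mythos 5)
Your proof is correct, and it takes a genuinely more elementary route than the paper's. The paper never proves this lemma directly: it obtains it as the quadratic specialization of a general Bregman-projection result (Lemma \ref{lem:dyk_cd_update_gen} in the supplement, with $f(v)=\half\|y-v\|_2^2$), whose proof runs through conjugate calculus --- the inversion rule $x \in \partial g(y) \Leftrightarrow y \in \partial g^*(x)$, the identity $h_{C_i}^* = 1_{C_i} = 1_{D_i}\circ X_i^T$ combined with the chain rule, and the first-order characterization of Bregman projections. You instead stay entirely in the Euclidean setting: you read off the subgradient optimality condition, use the exposed-face formula $\partial h_i(w) = \argmax_{d\in D_i}\langle d, w\rangle$ (valid since finiteness of the support function forces $D_i$ to be compact) to place the residual $r = b - X_i\hat{w}_i$ in $C_i$, and then verify the projection's variational inequality $\langle b - r, c - r\rangle \le 0$ for all $c \in C_i$ by moving $X_i$ across the inner product --- exactly the step where the exposed-face property does the work. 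This is self-contained and avoids conjugacy machinery altogether; what the paper's route buys is generality, since the same argument is needed anyway for nonquadratic losses (Theorem \ref{thm:dyk_cd_equiv_gen}), after which Lemma \ref{lem:dyk_cd_update} comes for free. Two small points you should make explicit to close the argument: (i) your converse direction presupposes that a minimizer exists, which follows from coercivity of the objective (full column rank of $X_i$ makes the quadratic term coercive in $w_i$, and $h_i \ge 0$ because $0 \in D_i$); and (ii) the uniqueness you invoke follows from strict convexity of $w_i \mapsto \half\|b - X_iw_i\|_2^2$ under the full-column-rank assumption, which is what licenses the claim that the minimizer is pinned down by the value of $X_i\hat{w}_i$.
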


\begin{theorem}
\label{thm:dyk_cd_equiv}
Assume the setup in Lemma \ref{lem:dyk_cd_update}, for each
$i=1,\ldots,d$. Then problems \eqref{eq:bap}, \eqref{eq:reg} are dual
to each other, and their solutions, denoted \smash{$\hat{u},\hat{w}$}, 
respectively, satisfy \smash{$\hat{u}=y-X\hat{w}$}.  Further,
Dykstra's algorithm \eqref{eq:dyk2} and coordinate descent
\eqref{eq:cd} are equivalent, and satisfy at all iterations
$k=1,2,3,\ldots$: 
$$
z_i^{(k)} = X_iw_i^{(k)} \quad\text{and}\quad
u_i^{(k)}=y - \sum_{j \leq i} X_j w_j^{(k)} - \sum_{j > i} X_j  
w_j^{(k-1)}, \quad\text{for $i=1,\ldots,d$}.
$$
\end{theorem}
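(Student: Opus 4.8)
The plan is to treat the two assertions—duality and algorithmic equivalence—separately, with Lemma~\ref{lem:dyk_cd_update} serving as the bridge in both. For the duality claim, I would form the Lagrangian dual of the regularized regression problem \eqref{eq:reg}. Introducing a residual variable $r = y - Xw$ with multiplier $u \in \R^n$, minimizing the Lagrangian over $r$ gives $r = u$ and contributes $-\half\|u\|_2^2$, while minimizing over each block $w_i$ produces $-h_i^*(X_i^T u)$, where $h_i^*$ is the conjugate of the seminorm $h_i$. Since $h_i$ is the support function of $D_i$, its conjugate $h_i^*$ is the indicator of $D_i$ (equal to $0$ on $D_i$ and $+\infty$ otherwise), so feasibility forces $X_i^T u \in D_i$, i.e.\ $u \in C_i$, for every $i$. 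The dual is therefore $\max_u\,[u^T y - \half\|u\|_2^2]$ subject to $u \in C_1 \cap \cdots \cap C_d$, which after completing the square is exactly the best approximation problem \eqref{eq:bap}. Because the $h_i$ are finite-valued, strong duality and primal attainment hold, so reading off the stationarity relations $r = u$ and $r = y - Xw$ yields $\hat u = y - X\hat w$ at optimality.

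For the equivalence of iterations, I would prove the two stated identities simultaneously by induction on $k$ (and, within each sweep, on $i$). The engine is Lemma~\ref{lem:dyk_cd_update}: writing $b_i^{(k)} = y - \sum_{j<i} X_j w_j^{(k)} - \sum_{j>i} X_j w_j^{(k-1)}$ for the partial residual driving the $i$th coordinate update \eqref{eq:cd}, the lemma gives $X_i w_i^{(k)} = (\Id - P_{C_i})(b_i^{(k)})$. The crux is to show that this input coincides with the Dykstra input, i.e.\ $b_i^{(k)} = u_{i-1}^{(k)} + z_i^{(k-1)}$. Substituting the inductive hypotheses $z_i^{(k-1)} = X_i w_i^{(k-1)}$ and $u_{i-1}^{(k)} = y - \sum_{j<i} X_j w_j^{(k)} - \sum_{j\geq i} X_j w_j^{(k-1)}$, the term $X_i w_i^{(k-1)}$ cancels and one recovers $b_i^{(k)}$ exactly. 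Granting this, $P_{C_i}(b_i^{(k)}) = b_i^{(k)} - X_i w_i^{(k)}$ matches the Dykstra update $u_i^{(k)} = P_{C_i}(u_{i-1}^{(k)} + z_i^{(k-1)})$ and gives $u_i^{(k)} = y - \sum_{j\leq i} X_j w_j^{(k)} - \sum_{j>i} X_j w_j^{(k-1)}$, advancing the $u$-identity, while $z_i^{(k)} = b_i^{(k)} - u_i^{(k)} = X_i w_i^{(k)}$ advances the $z$-identity. The base case is the matching initializations $w^{(0)}=0 \Leftrightarrow z_i^{(0)}=0$ together with $u_d^{(0)} = y$; for $i=1$ one uses $u_0^{(k)} = u_d^{(k-1)} = y - Xw^{(k-1)}$ to close the loop across sweeps.

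The part I expect to require the most care is this index bookkeeping in the equivalence argument. The partial residual $b_i^{(k)}$ mixes already-updated blocks $j<i$ (superscript $k$) with not-yet-updated blocks $j>i$ (superscript $k-1$), and matching it against $u_{i-1}^{(k)} + z_i^{(k-1)}$ hinges on the single cancellation of the $i$th block between the previous-sweep value stored in $z_i^{(k-1)}$ and the stale term carried inside $u_{i-1}^{(k)}$. Verifying the $i=1$ wrap-around and the consistency of the initialization are the only genuinely delicate points; everything else is a direct, block-by-block application of Lemma~\ref{lem:dyk_cd_update}.
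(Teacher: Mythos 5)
Your proof is correct, and the overall machinery (conjugate duality plus a Lemma~\ref{lem:dyk_cd_update}-driven induction) is the same as the paper's; the route, however, differs in a way worth noting. The paper never proves this theorem directly: it obtains it as the special case $f(v)=\half\|y-v\|_2^2$ of the nonquadratic result (Theorem~\ref{thm:dyk_cd_equiv_gen}), whose proof works with Bregman projections $P^g_{C_i}$ and the conjugate pair $g(v)=f^*(-v)$, $b=-\nabla f(0)$. For duality, the paper forms the Lagrange dual through $f^*$ and $h_i^*=1_{D_i}$ and then identifies $f^*(-u)$ with the Bregman divergence $D_g(u,b)$; your residual-splitting Lagrangian with $r=y-Xw$ and completing the square is the same computation with the Bregman layer stripped away. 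For the algorithmic equivalence, the direction of your induction buys a genuine simplification: the paper starts from Dykstra's iterates, must first establish that the image of the residual operator $\Id-P_{C_i}$ (in general, $\nabla g - \nabla g \circ P^g_{C_i}$) lies in $\col(X_i)$ in order to parametrize $z_i^{(k)}=X_i\tilde{w}_i^{(k)}$, and then shows this $\tilde{w}$-sequence obeys the coordinate descent recursion; you instead start from the coordinate descent iterates and verify that $X_iw_i^{(k)}$ and the partial residuals satisfy Dykstra's recursion, so the needed column-span fact is already supplied by the statement of Lemma~\ref{lem:dyk_cd_update} and no separate parametrization step is required. Your key cancellation $u_{i-1}^{(k)}+z_i^{(k-1)}=b_i^{(k)}$ and the wrap-around $u_0^{(k)}=u_d^{(k-1)}=y-Xw^{(k-1)}$ are exactly right, as is the base case. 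What your route buys is a self-contained, elementary proof of the quadratic case; what the paper's route buys is that the identical argument, run once in the Bregman setting, also delivers Theorem~\ref{thm:dyk_cd_equiv_gen} for arbitrary strictly convex differentiable losses, with the present theorem falling out as a corollary.
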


The equivalence between coordinate descent and Dykstra's algorithm
dates back to (at least) \citet{han1988successive,gaffke1989cyclic},
under the special case $X=I$.  In fact, \citet{han1988successive}, 
presumably unaware of Dykstra's algorithm, seems to have reinvented
the method and established convergence through its relationship to
coordinate descent.  This work then inspired \citet{tseng1993dual}
(who must have also been unaware of Dykstra's algorithm) to improve
the existing analyses of coordinate descent, which at the time all
assumed smoothness of the objective function.  (Tseng continued on to
become 
arguably the single most important contributor to the theory of
coordinate descent of the 1990s and 2000s, and his seminal work
\citet{tseng2001convergence} is still one of the most comprehensive
analyses to date.) 

References to this equivalence can be found speckled throughout the
literature on Dykstra's method, but given the importance of
the regularized problem form \eqref{eq:reg} for modern statistical and 
machine learning estimation tasks, we feel that the connection between
Dykstra's algorithm and coordinate descent and is not well-known
enough and should be better explored.  In what follows, 
we show that some old work on Dykstra's algorithm, fed through this
equivalence, yields new convergence results for coordinate descent
for the lasso and a new parallel version of coordinate descent. 

\paragraph{ADMM.}
The augmented Lagrangian method of multipliers or ADMM was
invented by \citet{glowinski1975approximation,gabay1976dual}.  ADMM is 
a member of a class of methods generally called {\it operator
  splitting techniques}, and is equivalent (via a duality argument) to
{\it Douglas-Rachford splitting} 
\citep{douglas1956numerical,lions1979splitting}.  Recently, there has
been a strong revival of interest in ADMM (and operator splitting
techniques in general), arguably due (at least in part) to the popular 
monograph of \citet{boyd2011distributed}, where it is 
argued that the ADMM framework offers an appealing flexibility in
algorithm design, which permits parallelization in many nontrivial 
situations. As with coordinate descent, it would be difficult
thoroughly describe recent developments on ADMM, given the magnitude 
and pace of the literature on this topic. To give just a few examples,
recent progress includes finite-time linear convergence rates for ADMM 
(see \citealt{nishihara2015general,hong2017linear} and references
therein), and accelerated extensions of ADMM (see
\citealt{goldstein2014fast,kadkhodaie2015accelerated} and references
therein).  

To derive an ADMM algorithm for \eqref{eq:bap}, we introduce
auxiliary variables and equality constraints to put the problem in
a suitable ADMM form.  While different
formulations for the auxiliary variables and constraints give rise to   
different algorithms, loosely speaking, these algorithms
generally take on similar forms to Dykstra's algorithm for 
\eqref{eq:bap}.  The same is also true of ADMM for the {\it set
  intersection problem}, a simpler task than the best approximation
problem \eqref{eq:bap}, in which we only seek a point in the
intersection $C_1 \cap \cdots \cap C_d$, and solve  
\begin{equation}
\label{eq:sip}
\min_{u \in \R^n} \; \sum_{i=1}^d 1_{C_i}(u_i),
\end{equation}
where $1_C(\cdot)$ denotes the indicator function of a 
set $C$ (equal to 0 on $C$, and $\infty$ otherwise).  Consider
the case of $d=2$ sets, in which case the 
translation of \eqref{eq:sip} into ADMM form is unambiguous. ADMM for
\eqref{eq:sip}, properly initialized, appears highly similar to
Dykstra's algorithm for \eqref{eq:bap}; so similar, in fact, that
\citet{boyd2011distributed} mistook the two algorithms for being 
equivalent, which is not generally true, and was shortly thereafter
corrected by \citet{bauschke2013projection}.  

Below we show that when $d=2$, $C_1$ is a linear subspace, and $y \in
C_1$, an ADMM algorithm for \eqref{eq:bap} (and not the simpler set 
intersection problem \eqref{eq:sip}) is indeed equivalent to Dykstra's
algorithm for \eqref{eq:bap}. Introducing auxiliary
variables, the problem \eqref{eq:bap} becomes  
$$
\min_{u_1,u_2 \in \R^n} \; \|y-u_1\|_2^2 + 1_{C_1}(u_1) + 1_{C_2}(u_2)
\quad \st \quad u_1=u_2,
$$
and the augmented Lagrangian is 
\smash{$L(u_1,u_2,z) = \|y-u_1\|_2^2 + 1_{C_1}(u_1) + 1_{C_2}(u_2)
  + \rho\|u_1-u_2+z\|_2^2$}, where $\rho>0$ is an
augmented Lagrangian parameter. ADMM now repeats, for
$k=1,2,3,\ldots$:
\begin{equation}
\label{eq:admm}
\begin{aligned}
u_1^{(k)} &= P_{C_1}\bigg(\frac{y}{1+\rho} +  
\frac{\rho(u_2^{(k-1)}-z^{(k-1)})}{1+\rho}\bigg), \\
u_2^{(k)} &= P_{C_2}(u_1^{(k)}+z^{(k-1)}), \\
z^{(k)} &= z^{(k-1)} + u_1^{(k)}-u_2^{(k)}.
\end{aligned}
\end{equation}
Suppose we initialize \smash{$u_2^{(0)}=0$},
\smash{$z^{(0)}=0$}, and set $\rho=1$. Using linearity of
\smash{$P_{C_1}$}, the fact that $y \in C_1$, and a simple inductive
argument, the above iterations can be rewritten as 
\begin{equation}
\label{eq:admm2}
\begin{aligned}
u_1^{(k)} &= %y - z^{(k-1)} = 
P_{C_1}(u_2^{(k-1)}), \\
u_2^{(k)} &= P_{C_2}(u_1^{(k)}+z^{(k-1)}), \\
z^{(k)} &= z^{(k-1)} + u_1^{(k)}-u_2^{(k)},
\end{aligned}
\end{equation}
which is precisely the same as Dykstra's iterations \eqref{eq:dyk2}, 
once we realize that, due again to linearity of \smash{$P_{C_1}$}, the
sequence \smash{$z_1^{(k)}$}, $k=1,2,3,\ldots$ in Dykstra's iterations
plays no role and can be ignored.  

Though $d=2$ sets in \eqref{eq:bap} may seem like a 
rather special case, the strategy for parallelization in both
Dykstra's algorithm and ADMM stems from rewriting a general $d$-set 
problem as a 2-set problem, so the above connection between Dykstra's
algorithm and ADMM can be relevant even for problems with $d>2$, and
will reappear in our later discussion of parallel coordinate
descent. As a matter of conceptual interest only, 
we note that for general $d$ (and no constraints on the sets being 
subspaces), Dykstra's iterations \eqref{eq:dyk2} can be viewed as a 
limiting version of the ADMM iterations either for \eqref{eq:bap} or
for \eqref{eq:sip}, as we send the augmented Lagrangian parameters  
to $\infty$ or to 0 at particular scalings.  See the supplement for
details. 
 
\section{Coordinate descent for the lasso}
\label{sec:lasso}

The {\it lasso} problem
\citep{tibshirani1996regression,chen1998atomic}, defined for a tuning
parameter $\lambda \geq 0$ as 
\begin{equation}
\label{eq:lasso}
\min_{w \in \R^p} \; \half\|y-Xw\|_2^2 + \lambda\|w\|_1, 
\end{equation}
is a special case of \eqref{eq:reg} where the coordinate blocks are of
each size 1, so that $X_i \in \R^n$, $i=1,\ldots,p$ are just
the columns of $X$, and $w_i \in \R$, $i=1,\ldots,p$ are the
components of  
$w$.  This problem fits into the framework of \eqref{eq:reg} with
\smash{$h_i(w_i)=\lambda|w_i|=\max_{d\in D_i} dw_i$} for 
$D_i=[-\lambda,\lambda]$, for each $i=1,\ldots,d$.

Coordinate descent is widely-used for the lasso 
\eqref{eq:lasso}, both because of the simplicity of the 
coordinatewise updates, and because careful implementations can 
achieve state-of-the-art performance, at the right problem sizes.
The use of coordinate descent for the lasso was popularized by 
\citet{friedman2007pathwise,friedman2010regularization}, but 
was studied earlier or concurrently by several others, e.g.,
\citet{fu1998penalized,sardy2000block,wu2008coordinate}. 

As we know from Theorem \ref{thm:dyk_cd_equiv}, the dual of
problem \eqref{eq:lasso} is the best approximation problem 
\eqref{eq:bap}, where 
\smash{$C_i=(X_i^T)^{-1}(D_i)=\{v \in \R^n: |X_i^T v|
\leq \lambda\}$} is an intersection of two halfspaces, for  
$i=1,\ldots,p$.  This makes
$C_1 \cap \cdots \cap C_d$ an intersection of
$2p$ halfspaces, i.e., a (centrally symmetric) polyhedron.  
For projecting onto a polyhedron, it is well-known that Dykstra's
algorithm reduces to {\it Hildreth's algorithm}
\citep{hildreth1957quadratic}, an older method for quadratic
programming that itself has an
interesting history in optimization.  Theorem 
\ref{thm:dyk_cd_equiv} hence shows coordinate descent for the
lasso \eqref{eq:lasso} is equivalent not only to Dykstra's 
algorithm, but also to Hildreth's algorithm, for \eqref{eq:bap}. 

This equivalence suggests a number of interesting directions to
consider. For example, key practical speedups have been developed for  
coordinate descent for the lasso that enable this method to attain 
state-of-the-art performance at the right problem sizes, such as
clever updating rules and screening rules
(e.g.,
\citealt{friedman2010regularization,elghaoui2012safe,tibshirani2012strong,  
wang2015dual}). These implementation tricks can now be used with
Dykstra's (Hildreth's) algorithm. On the flip side, as we show
next, older results from
\citet{iusem1990convergence,deutsch1994rate} on Dykstra's algorithm  
for polyhedra, lead to interesting new results 
on coordinate descent for the lasso.  

\begin{theorem}[\textbf{Adaptation of \citealt{iusem1990convergence}}]  
\label{thm:cd_lasso_rate_ip}
Assume the columns of $X \in \R^{n \times p}$ are in general position, 
and $\lambda>0$.
Then coordinate descent for the lasso \eqref{eq:lasso} has an
asymptotically linear convergence rate, in that for large
enough $k$, 
\begin{equation}
\label{eq:cd_lasso_rate_ip}
\frac{\|w^{(k+1)} - \hat{w}\|_\Sigma}{\|w^{(k)}-\hat{w}\|_\Sigma} \leq  
\bigg(\frac{a^2}{a^2 + \lambda_{\min}(X_A^T X_A)/\max_{i \in A} 
    \|X_i\|_2^2}\bigg)^{1/2},  
\end{equation}
where \smash{$\hat{w}$} is the lasso solution in
\eqref{eq:lasso}, $\Sigma=X^T X$, and
\smash{$\|z\|_\Sigma^2=z^T \Sigma z$} for $z \in \R^p$,
\smash{$A=\supp(\hat{w})$} is the active set of 
\smash{$\hat{w}$}, $a=|A|$ is its size, 
\smash{$X_A \in \R^{n \times a}$} denotes the columns of $X$ indexed  
by $A$, and \smash{$\lambda_{\min}(X_A^T X_A)$} denotes the smallest  
eigenvalue of \smash{$X_A^T X_A$}.
\end{theorem}

\begin{theorem}[\textbf{Adaptation of \citealt{deutsch1994rate}}]  
\label{thm:cd_lasso_rate_dh}
Assume the same conditions and notation as in Theorem 
\ref{thm:cd_lasso_rate_ip}.  Then for large enough $k$, 
\begin{equation}
\label{eq:cd_lasso_rate_dh}
\frac{\|w^{(k+1)} - \hat{w}\|_\Sigma}{\|w^{(k)}-\hat{w}\|_\Sigma} \leq  
\Bigg(1-\prod_{j=1}^{a-1} 
\frac{\|P_{\{i_{j+1},\ldots,i_a\}}^\perp
  X_{i_j}\|_2^2}{\|X_{i_j}\|_2^2}\Bigg)^{1/2},
\end{equation}
where we enumerate $A=\{i_1,\ldots,i_a\}$,
$i_1<\ldots<i_a$, and we denote by
\smash{$P_{\{i_{j+1},\ldots,i_a\}}^\perp$} the projection onto 
the orthocomplement of the column span of
\smash{$X_{\{i_{j+1},\ldots,i_a\}}$}.
\end{theorem}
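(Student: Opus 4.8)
The plan is to import the Deutsch--Hundal rate for cyclic projections onto affine subspaces through the equivalence of Theorem \ref{thm:dyk_cd_equiv}, in exactly the same way as for Theorem \ref{thm:cd_lasso_rate_ip}; the two theorems share every step except the last, differing only in which external convergence result is invoked. By Theorem \ref{thm:dyk_cd_equiv}, coordinate descent for the lasso is identical to Dykstra's algorithm for the best approximation problem \eqref{eq:bap} with slabs $C_i=\{v:|X_i^T v|\le\lambda\}$, and the iterates are linked by $u_d^{(k)}=y-Xw^{(k)}$. Since $\hat{u}=y-X\hat{w}$, we get $u_d^{(k)}-\hat{u}=-X(w^{(k)}-\hat{w})$ and hence
\[
\|w^{(k)}-\hat{w}\|_\Sigma=\|X(w^{(k)}-\hat{w})\|_2=\|u_d^{(k)}-\hat{u}\|_2,
\]
so the ratio on the left of \eqref{eq:cd_lasso_rate_dh} is exactly the per-cycle contraction of the Euclidean error of Dykstra's iterates toward $\hat{u}$. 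It therefore suffices to bound that contraction.

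Next I would carry out the asymptotic reduction to the active faces. Because $\lambda>0$ and the columns of $X$ are in general position, the lasso solution is unique with active set $A=\supp(\hat{w})$, and the KKT conditions give $X_i^T\hat{u}=\lambda\,\sign(\hat{w}_i)$ for $i\in A$ while $|X_i^T\hat{u}|<\lambda$ \emph{strictly} for $i\notin A$. Thus $\hat{u}$ lies strictly inside every inactive slab and on the boundary hyperplane $H_i=\{v:X_i^T v=\lambda\,\sign(\hat{w}_i)\}$ of every active slab. Since Dykstra's iterates converge to $\hat{u}$, after finitely many iterations each inactive projection becomes the identity and its Dykstra dual variable is $0$, while each active projection is onto the fixed face $H_i$. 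From that point on, Dykstra's algorithm collapses to plain cyclic (alternating) projection onto the $a$ affine hyperplanes $H_{i_1},\ldots,H_{i_a}$, whose intersection contains $\hat{u}$.

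Finally, I would invoke the Deutsch--Hundal rate. The contraction of cyclic projection onto affine subspaces depends only on their direction subspaces, here $M_j=X_{i_j}^\perp$, so the affine offsets are irrelevant and their theorem applies. Matching their principal-angle quantities to the present geometry, the angle $\theta_j$ between the $j$-th direction and the span of the later columns satisfies
\[
\sin^2\theta_j=\frac{\|P_{\{i_{j+1},\ldots,i_a\}}^\perp X_{i_j}\|_2^2}{\|X_{i_j}\|_2^2},
\]
and the Deutsch--Hundal bound reads $\|u_d^{(k+1)}-\hat{u}\|_2/\|u_d^{(k)}-\hat{u}\|_2\le(1-\prod_{j=1}^{a-1}\sin^2\theta_j)^{1/2}$ for large $k$. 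Substituting this expression and applying the norm identity from the first step yields \eqref{eq:cd_lasso_rate_dh}.

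The main obstacle is the middle step: making rigorous that the inactive slabs and their Dykstra dual variables drop out in finite time, so that the algorithm genuinely reduces to cyclic projection onto the fixed active faces. This rests on strict dual feasibility of the inactive constraints (guaranteed by general position together with $\lambda>0$) plus convergence of the Dykstra iterates, and it is precisely where the ``asymptotically'' and ``large enough $k$'' qualifiers enter. The remaining work---checking that the Deutsch--Hundal principal angles reduce to the stated $\sin^2\theta_j$---is a routine linear-algebra identification; as a useful sanity check, the product $\prod_{j=1}^{a-1}\sin^2\theta_j$ telescopes to $\det(X_A^T X_A)/\prod_{i\in A}\|X_i\|_2^2$ by the Gram--Schmidt determinant formula, confirming in particular that the bound does not depend on the chosen ordering of $A$.
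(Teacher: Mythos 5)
Your proposal is correct and arrives at exactly the bound \eqref{eq:cd_lasso_rate_dh}, but by a more self-contained route than the paper. The paper imports Theorem 3.8 and Lemma 3.7 of \citet{deutsch1994rate} as a package: after splitting each slab into two halfspaces (so Hildreth's setting applies), it quotes their asymptotic constant---which arrives as a maximum over subsets $B \subseteq A$ of active hyperplanes, as in \eqref{eq:asymp_linear_dh}---and then simplifies in two steps, using the identity $c(L,M)=c(L^\perp,M^\perp)$ to convert angles between hyperplanes into angles between column spans, and arguing that the maximum is attained at $B=A$ since enlarging the second subspace can only increase its cosine against a one-dimensional subspace. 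You instead re-derive the core of Deutsch--Hundal's own argument: finite identification (inactive slabs become identity maps with zero dual variables; active slabs project onto their fixed faces $H_i$), collapse of Dykstra's iterations to plain cyclic projection onto the affine hyperplanes $H_{i_1},\ldots,H_{i_a}$ (valid because each dual increment lies in the orthocomplement of the corresponding direction space, hence does not perturb an affine projection), and then the classical Smith--Solmon--Wagner angle rate for cyclic projections onto subspaces---a result older than, and used inside, \citet{deutsch1994rate}, so your attribution of that rate to them is the one citation quibble. Your route buys transparency: the max-over-subsets issue never arises because your identification step pins the asymptotically relevant faces to exactly $A$, and your closing observation that the product telescopes to $\det(X_A^T X_A)/\prod_{i \in A}\|X_i\|_2^2$, hence is independent of the ordering of $A$, is a nice check absent from the paper. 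The cost, which you correctly flag, is that the rigor burden shifts to the identification step: you need $w_i^{(k)}=0$ exactly (not merely converging to $0$) for $i \notin A$ and persistent face identification for $i \in A$ at all large $k$; this is precisely what Lemma 3.7 of \citet{deutsch1994rate} (or Lemma 1 of \citealt{iusem1990convergence}) supplies, and it also requires, as the paper itself glosses, that under general position the equicorrelation set equals $\supp(\hat{w})$ for almost every $y$, together with convergence of the coefficient/dual iterates to the unique solution.
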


The results in Theorems \ref{thm:cd_lasso_rate_ip},
\ref{thm:cd_lasso_rate_dh} both rely on the assumption 
of general position for the columns of $X$.  This is only used for
convenience and can be removed at the expense of more complicated
notation.  Loosely put, the general position condition simply rules
out trivial linear dependencies between small numbers of columns of
$X$, but places no restriction on the dimensions of $X$ (i.e., it
still allows for $p \gg n$).  It implies that the lasso
solution \smash{$\hat{w}$} is unique, and that $X_A$ (where
\smash{$A=\supp(\hat{w})$}) has full column rank. See
\citet{tibshirani2013lasso} 
for a precise definition of general position and proofs of these
facts.  We note that when $X_A$ has full column rank, the bounds in 
\eqref{eq:cd_lasso_rate_ip},  
\eqref{eq:cd_lasso_rate_dh} are strictly less than 1.

\begin{remark}[\textbf{Comparing \eqref{eq:cd_lasso_rate_ip} and
    \eqref{eq:cd_lasso_rate_dh}}] 
Clearly, both the bounds in \eqref{eq:cd_lasso_rate_ip},
\eqref{eq:cd_lasso_rate_dh} are adversely affected by correlations
among $X_i$, $i \in A$ (i.e., stronger correlations will bring
each closer to 1).  It seems to us that
\eqref{eq:cd_lasso_rate_dh} is usually 
the smaller of the two bounds, based on simple
mathematical and numerical comparisons. More detailed
comparisons would be interesting, but is beyond the scope of this
paper. 
% Interestingly, a
% statistician would call \smash{$1-\|P_{\{i_{j+1},\ldots,i_a\}}^\perp
%   X_{i_j}\|_2^2/\|X_{i_j}\|_2^2$} the proportion of variance
%   explained in regressing \smash{$X_{i_j}$} onto 
% \smash{$X_{i_{j+1}},\ldots,X_{i_a}$}. 
% A simple fact in support of this claim is 
% that, when $X_i$, $i \in A$ are orthogonal, the 
% bound in \eqref{eq:cd_lasso_rate_dh} is 0, whereas that in
% \eqref{eq:cd_lasso_rate_ip} is not 
% (to make sense of the error constant being 0, 
% remember, these are asymptotic bounds). 
% Simple numerical examples also show that
% \eqref{eq:cd_lasso_rate_dh} is often tighter.  Detailed
% numerical or mathematical comparisons between
% \eqref{eq:cd_lasso_rate_ip}, \eqref{eq:cd_lasso_rate_dh} would be
% interesting, but is beyond the scope of this paper.
\end{remark}

\begin{remark}[\textbf{Linear convergence without strong convexity}] 
One striking feature of the results in Theorems
\ref{thm:cd_lasso_rate_ip}, \ref{thm:cd_lasso_rate_dh} is that they
guarantee (asymptotically) linear convergence of the coordinate
descent iterates for the lasso, with no assumption about strong
convexity of the objective. More precisely, there are no restrictions
on the dimensionality of $X$, so we enjoy linear convergence {\it even  
without an assumption on the smooth part of the objective}.  This is
in line with classical results on coordinate descent for smooth
functions, see, e.g., \citet{luo1992convergence}.  The modern 
finite-time convergence analyses of coordinate descent do
not, as far as we understand, replicate this remarkable
property. 
For example, \citet{beck2013convergence,li2016improved} establish 
finite-time 
linear convergence rates for coordinate descent, but require strong
convexity of the entire objective.  
% We are not even aware of results of this nature that only require
% strong convexity of the smooth part.  
\end{remark}

\begin{remark}[\textbf{Active set identification}]
The asymptotics developed in
\citet{iusem1990convergence,deutsch1994rate} 
are based on a notion of (in)active set identification: the
critical value of $k$ after which \eqref{eq:cd_lasso_rate_ip}, 
\eqref{eq:cd_lasso_rate_dh} hold is based on the (provably finite)
iteration number at which Dykstra's algorithm identifies the inactive
halfspaces, i.e., at which coordinate descent identifies the inactive
set of variables, \smash{$A^c=\supp(\hat{w})^c$}.  This might help 
explain why in practice coordinate descent for the
lasso performs exceptionally well with warm starts, over a
decreasing sequence of tuning parameter values $\lambda$ (e.g.,
\citealt{friedman2007pathwise,friedman2010regularization}): here,
each coordinate descent run is likely to identify the
(in)active set---and hence enter the linear convergence phase---at an 
early iteration number. 
\end{remark}

\section{Parallel coordinate descent}
\label{sec:parallel}

\paragraph{Parallel-Dykstra-CD.} 
An important consequence of the connection between Dykstra's algorithm
and coordinate descent is a new parallel version of the latter,
stemming from an old parallel version of the former.  A parallel
version of Dykstra's algorithm is usually credited to
\citet{iusem1987simultaneous} for polyhedra and
\citet{gaffke1989cyclic} for general sets, but really the idea dates
back to the product space formalization of
\citet{pierra1984decomposition}.  We rewrite problem \eqref{eq:bap} as 
% by introducing a new variable $u_i$ for each set $C_i$,
% $i=1,\ldots,d$, 
\begin{equation}
\label{eq:bap_prod}
\min_{u=(u_1,\ldots,u_d) \in \R^{nd}} \; \sum_{i=1}^d \gamma_i
\|y-u_i\|_2^2  
\quad \st \quad u \in C_0 \cap (C_1 \times \cdots \times C_d), 
\end{equation}
where \smash{$C_0=\{ (u_1,\ldots,u_d) \in \R^{nd} :
  u_1=\cdots=u_d\}$}, and $\gamma_1,\ldots,\gamma_d>0$ are  
weights that sum to 1.  
After rescaling appropriately to turn \eqref{eq:bap_prod}
into an unweighted best approximation problem, we can 
% (this is equivalent to endowing
% $\R^{nd}$ with a weighted inner product). 
apply Dykstra's algorithm, which sets
\smash{$u_1^{(0)}=\cdots=u_d^{(0)}=y$},
\smash{$z_1^{(0)}=\cdots=z_d^{(0)}=0$}, and repeats:  
\begin{equation}
\begin{aligned}
\label{eq:dyk_par}
&u_0^{(k)} = \sum_{i=1}^d \gamma_i u_i^{(k-1)}, \\ 
&\begin{rcases*}
u_i^{(k)} = P_{C_i} (u_0^{(k)} + z_i^{(k-1)}), & \\
z_i^{(k)} = u_0^{(k)} + z_i^{(k-1)} - u_i^{(k)}, &
\end{rcases*}
\quad \text{for $i=1,\ldots,d$},
\end{aligned}
\end{equation}
for $k=1,2,3,\ldots$. The steps enclosed in 
curly brace above can all be performed in parallel, so that
\eqref{eq:dyk_par} is a parallel version of Dykstra's algorithm  
\eqref{eq:dyk2} for \eqref{eq:bap}.  
Applying Lemma \ref{lem:dyk_cd_update}, and a straightforward
inductive argument, the above algorithm can be rewritten as follows.
We set \smash{$w^{(0)}=0$}, and repeat:  
\begin{equation}
\label{eq:cd_dyk_par}
w_i^{(k)} = \argmin_{w_i \in \R^{p_i}} \;  
\half \Big\|y - Xw^{(k-1)} + X_i w_i^{(k-1)}/\gamma_i -  
X_i w_i/\gamma_i\Big\|_2^2 + h_i (w_i/\gamma_i),
\quad i=1,\ldots,d,
\end{equation}
for $k=1,2,3,\ldots$, which we call {\it parallel-Dykstra-CD} (with CD 
being short for coordinate descent).  
Again, note that each update in \eqref{eq:cd_dyk_par} can be performed
in parallel, so that \eqref{eq:cd_dyk_par} is a parallel version of 
coordinate descent \eqref{eq:cd} for \eqref{eq:reg}.  Also,
as \eqref{eq:cd_dyk_par} is just a reparametrization of
Dykstra's algorithm \eqref{eq:dyk_par} for the 2-set problem
\eqref{eq:bap_prod},  
%with \smash{$u_0^{(k)}=y-X\hat{w}^{(k-1)}$}, $k=1,2,3,\ldots$,   
it is guaranteed to converge in full generality,
as per the standard results on Dykstra's algorithm
\citep{han1988successive,gaffke1989cyclic}.  

\begin{theorem}
\label{thm:cd_dyk_par}
Assume that $X_i \in \R^{n\times p_i}$ has full column rank and  
\smash{$h_i(v) =\max_{d \in D_i} \langle d,v \rangle$} for a
closed, convex set $D_i \subseteq \R^{p_i}$ containing 0, for
$i=1,\ldots,d$. If \eqref{eq:reg} has a unique solution, then
the iterates in \eqref{eq:cd_dyk_par} converge to this solution. 
More generally, if the interior of \smash{$\cap_{i=1}^d
  (X_i^T)^{-1}(D_i)$}  
is nonempty, then the sequence \smash{$w^{(k)}$}, $k=1,2,3,\ldots$
from \eqref{eq:cd_dyk_par} has at least one accumulation point, and
any such point solves \eqref{eq:reg}.
Further, \smash{$Xw^{(k)}$}, $k=1,2,3,\ldots$ converges to 
\smash{$X\hat{w}$}, the optimal fitted value in \eqref{eq:reg}.   
\end{theorem}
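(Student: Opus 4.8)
The plan is to prove all three claims by transporting them across the equivalence, already assembled in the text preceding the statement, between parallel-Dykstra-CD \eqref{eq:cd_dyk_par} and Dykstra's algorithm \eqref{eq:dyk_par} run on the two-set best approximation problem \eqref{eq:bap_prod} in the product space $\R^{nd}$, whose two sets are the subspace $C_0$ and the product $C_1 \times \cdots \times C_d$. First I would make the bookkeeping of this equivalence explicit: the induction sketched before the theorem, applying Lemma \ref{lem:dyk_cd_update} blockwise, yields the identities $X_i w_i^{(k)} = \gamma_i z_i^{(k)}$ and $u_0^{(k)} = y - Xw^{(k-1)}$, so in particular $Xw^{(k)} = y - u_0^{(k+1)} = y - \sum_i \gamma_i u_i^{(k)}$, while the blockwise stationarity of \eqref{eq:cd_dyk_par} reads $X_i^T u_i^{(k)} \in \partial h_i(w_i^{(k)}/\gamma_i)$ for each $i$. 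I would also record at the outset that the feasible set is never empty: since $0 \in D_i$ we have $0 \in C_i = (X_i^T)^{-1}(D_i)$ for every $i$, hence $0 \in \bigcap_i C_i$, which is all that the basic Dykstra convergence theorem requires.

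The fitted-value claim (the last sentence) is then immediate and unconditional. By the Boyle--Dykstra theorem \citep{boyle1986method} applied to the two closed convex sets $C_0$ and $C_1 \times \cdots \times C_d$ (with the averaging step being projection onto $C_0$ in the $\gamma$-weighted inner product), the primal iterates converge in norm, $u_i^{(k)} \to \hat u$ for every $i$, where $(\hat u,\ldots,\hat u)$ solves \eqref{eq:bap_prod}; restricting the objective to $C_0$ shows $\hat u$ is exactly the solution of \eqref{eq:bap}, so $\hat u = y - X\hat w$ by Theorem \ref{thm:dyk_cd_equiv}. Feeding $u_0^{(k+1)} = \sum_i \gamma_i u_i^{(k)} \to \hat u$ into $Xw^{(k)} = y - u_0^{(k+1)}$ gives $Xw^{(k)} \to y - \hat u = X\hat w$.

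Next I would show that every accumulation point of $w^{(k)}$ solves \eqref{eq:reg}. Fix a subsequence $w^{(k_j)} \to w^\star$. Passing to the limit in $X_i^T u_i^{(k_j)} \in \partial h_i(w_i^{(k_j)}/\gamma_i)$, using $u_i^{(k)} \to \hat u$ and the closedness of the graph of $\partial h_i$, yields $X_i^T \hat u \in \partial h_i(w_i^\star/\gamma_i)$. Here I would invoke the one structural fact that neutralizes the weights: because each $h_i$ is a seminorm it is positively homogeneous of degree one, so its subdifferential is constant along rays, giving $\partial h_i(w_i^\star/\gamma_i) = \partial h_i(w_i^\star)$. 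Since $Xw^\star = \lim_j Xw^{(k_j)} = X\hat w$ by the previous paragraph, we have $\hat u = y - X\hat w = y - Xw^\star$, and therefore $X_i^T(y - Xw^\star) \in \partial h_i(w_i^\star)$ for all $i$, which is precisely the KKT condition for \eqref{eq:reg}; thus $w^\star$ is a solution.

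It remains to produce accumulation points, and under uniqueness a genuine limit, and this is where I expect the real work to sit. The obstacle is that the Dykstra dual variables $z_i^{(k)}$ (equivalently $X_i w_i^{(k)}$, equivalently $w_i^{(k)}$ by full column rank of $X_i$) need not stay bounded for arbitrary closed convex sets; one knows only that the increments $z_i^{(k)} - z_i^{(k-1)} = u_0^{(k)} - u_i^{(k)} \to 0$ vanish, which does not by itself bound the sequence. Under the interior (Slater) hypothesis $\setint\big(\bigcap_i C_i\big) \neq \emptyset$, I would invoke the duality-based convergence analysis of \citet{gaffke1989cyclic} (see also \citet{han1988successive}) for the dual iterates of Dykstra's algorithm, which under this constraint qualification keeps $\{w^{(k)}\}$ bounded; boundedness then furnishes at least one accumulation point, and by the previous paragraph every such point solves \eqref{eq:reg}. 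When \eqref{eq:reg} has a unique solution $\hat w$, the same classical dual-convergence theory, now specialized to a dual problem with a unique minimizer, yields convergence of $w^{(k)}$ to that minimizer, which must be $\hat w$. I would isolate the control of the dual sequence as the crux, and take care to verify that the cited results apply verbatim to the two-set product reformulation \eqref{eq:bap_prod} (one set a subspace), rather than only to the cyclic single-space version.
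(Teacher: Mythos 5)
Your proposal is correct, and it routes two of the three claims differently from the paper. The paper's proof is entirely duality-driven: it writes the product constraint set as $(M^T)^{-1}(\tilde D_1 \times \cdots \times \tilde D_d)$ for the block-diagonal matrix $M$, applies Theorem \ref{thm:dyk_cd_equiv} to the rescaled two-set problem \eqref{eq:bap_prod2} to show its dual is \eqref{eq:reg_prod}, optimizes out the subspace block and rescales to identify \eqref{eq:reg_prod} with \eqref{eq:reg}, and then reads off all three conclusions from \citet{han1988successive} on the dual iterates of the successive projection method: Lemma 4.9 for the uniqueness claim, and Theorem 4.7 (with the polyhedral refinement of Theorem 4.8) for the accumulation-point and fitted-value claims. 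You instead prove the fitted-value claim unconditionally from Boyle--Dykstra primal convergence in the product space (a mild strengthening: the paper only asserts it under the interior hypothesis), and you prove optimality of accumulation points by a direct KKT passage to the limit; your observation that $\partial h_i$ is constant along rays for a support function, so the $1/\gamma_i$ rescaling is invisible to the subdifferential, is exactly the right structural fact and is valid (including at $0$, where both sides equal $D_i$). These two arguments are more elementary and self-contained than the paper's. For the remaining claims---existence of accumulation points under the interior condition, and convergence under uniqueness---you fall back on the same Han/Gaffke--Mathar theory the paper cites, so there the proofs coincide. One caution about the step you rightly call the crux: verifying that those cited results apply to the two-set product formulation is not vacuous, because the intersection $\tilde C_0 \cap (\tilde C_1 \times \cdots \times \tilde C_d)$ has \emph{empty} interior in $\R^{nd}$ whenever $d \geq 2$ (as $\tilde C_0$ is a proper subspace), so a naive Slater condition in the product space fails. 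The paper closes this via Han's Theorem 4.8, which exempts polyhedral sets (such as the subspace $\tilde C_0$) from the interiority requirement, together with the observation that $\setint \cap_{i=1}^d C_i \neq \emptyset$ implies $\cap_{i=1}^d \setint C_i \neq \emptyset$; your plan would need this same ingredient, plus the duality correspondence (the paper's rescaling derivation of \eqref{eq:reg_prod}) that identifies Han's dual iterates with your $w^{(k)}$ and matches uniqueness of solutions across the two problems.
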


There have been many recent exciting contributions to the parallel 
coordinate descent literature; two standouts
are \citet{jaggi2014communication,richtarik2016parallel}, 
and numerous others are described in \citet{wright2015coordinate}. 
What sets parallel-Dykstra-CD apart, perhaps, is its 
simplicity: convergence of the iterations \eqref{eq:cd_dyk_par}, given
in Theorem \ref{thm:cd_dyk_par}, just stems from the connection   
between coordinate descent and Dykstra's algorithm, and the fact that the  
parallel Dykstra iterations \eqref{eq:dyk_par} are nothing more than
the usual Dykstra iterations after a product space
reformulation.  Moreover, parallel-Dykstra-CD for the lasso enjoys an 
(asymptotic) linear convergence rate under essentially no assumptions,  
thanks once again to an old result on the parallel Dykstra (Hildreth) 
algorithm from \citet{iusem1990convergence}.   
The details can be found in the supplement. 

\paragraph{Parallel-ADMM-CD.}  As an alternative to the parallel
method derived using Dykstra's algorithm, ADMM can also offer a
version of parallel coordinate descent.  Since \eqref{eq:bap_prod} is
a best approximation problem with $d=2$ sets, we can refer back to our  
earlier ADMM algorithm in \eqref{eq:admm} for this problem.
By passing these ADMM iterations through the connection developed in
Lemma \ref{lem:dyk_cd_update}, we arrive at what we call {\it
  parallel-ADMM-CD}, which initializes \smash{$u_0^{(0)}=y$},
\smash{$w^{(-1)}=w^{(0)}=0$}, and repeats:
\begin{equation}
\begin{aligned}
\label{eq:cd_admm_par}
u_0^{(k)} &= \frac{(\sum_{i=1}^d \rho_i) u_0^{(k-1)}
}{1+\sum_{i=1}^d \rho_i} 
+ \frac{y-Xw^{(k-1)}}{1+\sum_{i=1}^d \rho_i}
+ \frac{X (w^{(k-2)}-w^{(k-1)})}{1+\sum_{i=1}^d \rho_i}, \\
w_i^{(k)} &= \argmin_{w_i \in \R^{p_i}} \;  
\half \Big\|u_0^{(k)} + X_i w_i^{(k-1)}/\rho_i -
X_i w_i/\rho_i\Big\|_2^2 + h_i (w_i/\rho_i),
\quad i=1,\ldots,d,
\end{aligned}
\end{equation}
for $k=1,2,3,\ldots$, where $\rho_1,\ldots,\rho_d>0$ are augmented  
Lagrangian parameters. In each iteration, the updates to
\smash{$w_i^{(k)}$}, $i=1,\ldots,d$ above can be done in 
parallel. Just based on their form, it seems that
\eqref{eq:cd_admm_par} can be seen as a parallel version of coordinate  
descent \eqref{eq:cd} for problem \eqref{eq:reg}. The next result
confirms this, leveraging standard theory for ADMM
\citep{gabay1983applications, eckstein1992douglas}.

\begin{theorem}
\label{thm:cd_admm_par}
Assume that $X_i \in \R^{n\times p_i}$ has full column rank and  
\smash{$h_i(v) =\max_{d \in D_i} \langle d,v \rangle$} for a
closed, convex set $D_i \subseteq \R^{p_i}$ containing 0, for
$i=1,\ldots,d$.  Then the sequence \smash{$w^{(k)}$}, $k=1,2,3,\ldots$ 
in \eqref{eq:cd_admm_par} converges to a solution in \eqref{eq:reg}. 
\end{theorem}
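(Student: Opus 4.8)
The plan is to recognize the parallel-ADMM-CD iterations \eqref{eq:cd_admm_par} as nothing more than a reparametrization of two-block ADMM for the product-space best approximation problem \eqref{eq:bap_prod}, and then to read off convergence from standard ADMM theory, transporting the conclusion back to the coefficient variables \smash{$w^{(k)}$} through the dual correspondence of Lemma \ref{lem:dyk_cd_update}. Recall that \eqref{eq:bap_prod} is a best approximation problem with exactly two sets: the consensus subspace \smash{$C_0 = \{(u_1,\ldots,u_d) : u_1 = \cdots = u_d\}$} and the product \smash{$C_1 \times \cdots \times C_d$}. Since $C_0$ is a linear subspace and the point being approximated, $(y,\ldots,y)$, lies in $C_0$, this is precisely the configuration treated in \eqref{eq:admm}: I would take $f$ to collect the quadratic term \smash{$\sum_{i=1}^d \gamma_i\|y-u_i\|_2^2$} together with the subspace indicator \smash{$1_{C_0}$}, and \smash{$g = 1_{C_1 \times \cdots \times C_d} = \sum_{i=1}^d 1_{C_i}$}, with per-block augmented Lagrangian parameters $\rho_1,\ldots,\rho_d$.

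First I would carry out the algebraic identification. The first ADMM block update is a projection onto the subspace $C_0$ (a weighted consensus average) combined with the quadratic; using linearity of \smash{$P_{C_0}$}, the fact that $(y,\ldots,y)\in C_0$, and an inductive argument of exactly the type taking \eqref{eq:admm} to \eqref{eq:admm2}, this collapses to the \smash{$u_0^{(k)}$} line of \eqref{eq:cd_admm_par}, including the \smash{$w^{(k-2)}-w^{(k-1)}$} term. The second ADMM block update separates across $i$ into blockwise projections \smash{$P_{C_i}$}; applying Lemma \ref{lem:dyk_cd_update} with \smash{$C_i=(X_i^T)^{-1}(D_i)$} rewrites each projection residual as \smash{$(\Id-P_{C_i})(\cdot)=X_i w_i^{(k)}$}, where \smash{$w_i^{(k)}$} solves the regularized-regression subproblem in the second line of \eqref{eq:cd_admm_par}, under the identification \smash{$z_i^{(k)}=X_iw_i^{(k)}$} of the ADMM dual variable. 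With this identification in hand, $f$ and $g$ are closed, proper, and convex, and \eqref{eq:bap_prod} has a nonempty feasible set (it contains $(0,\ldots,0)$, since each $D_i$ contains $0$, forcing $0\in C_i$), so a solution exists and the hypotheses of the standard two-block ADMM theory \citep{gabay1983applications,eckstein1992douglas} are met. That theory then gives that the residual \smash{$u_1^{(k)}-u_2^{(k)}$} tends to $0$ and, via the Douglas--Rachford fixed-point characterization, that the primal and dual iterates converge. In particular \smash{$X_iw_i^{(k)}=z_i^{(k)}$} converges; since each $X_i$ has full column rank this forces \smash{$w_i^{(k)}$} itself to converge, and by Theorem \ref{thm:dyk_cd_equiv} the limit solves \eqref{eq:reg}.

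I expect the main obstacle to be the bookkeeping in the identification step, rather than the convergence argument itself. Two points require care: placing the distinct per-block parameters \smash{$\rho_i$} correctly, which amounts to running ADMM under a block-diagonal augmented-Lagrangian metric (equivalently, rescaling each block \smash{$u_i \mapsto \sqrt{\rho_i}\,u_i$}) and checking that the cited theory still covers this scaled form; and justifying convergence of \smash{$w^{(k)}$} itself, not merely of the fitted values \smash{$Xw^{(k)}$}, which is exactly where full column rank of the \smash{$X_i$} enters. Once the iteration is correctly recast as two-block ADMM on a solvable closed-convex problem, the remaining work is a direct appeal to \citep{gabay1983applications,eckstein1992douglas}.
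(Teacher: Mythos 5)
Your proposal is correct and follows essentially the same route as the paper's own proof: recast \eqref{eq:cd_admm_par} as two-block ADMM on the product-space problem \eqref{eq:bap_prod} (the paper works with its rescaled form \eqref{eq:bap_prod2}, absorbing the per-block parameters via $\rho_i=\rho\gamma_i$, which matches your block-rescaling remark), identify the dual iterates with $X_i w_i^{(k)}$ through Lemma \ref{lem:dyk_cd_update}, and invoke the standard theory of \citet{gabay1983applications,eckstein1992douglas} (the paper also cites \citealt{boyd2011distributed}) to get convergence of the ADMM dual variable to a solution of the dual problem, which after the product-space duality reduction is exactly \eqref{eq:reg}. The only cosmetic difference is that the paper phrases the last step as convergence of the scaled dual iterates to a solution of the intermediate dual \eqref{eq:reg_prod}, whereas you make explicit the (implicit in the paper) use of full column rank to pass from convergence of $X_i w_i^{(k)}$ to convergence of $w_i^{(k)}$.
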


The parallel-ADMM-CD iterations in \eqref{eq:cd_admm_par} and 
parallel-Dykstra-CD iterations in \eqref{eq:cd_dyk_par} differ in 
that, where the latter uses a residual
\smash{$y-Xw^{(k-1)}$}, the former uses an iterate
\smash{$u_0^{(k)}$} that seems to have a more complicated form,  
being a convex combination of 
\smash{$u_0^{(k-1)}$} and \smash{$y-Xw^{(k-1)}$}, plus
a quantity that acts like a momentum term.  It turns out that
when $\rho_1,\ldots,\rho_d$ sum to 1, the two methods
\eqref{eq:cd_dyk_par}, \eqref{eq:cd_admm_par} are exactly the same.
While this may seem like a surprising coincidence, it is in fact
nothing more than a reincarnation of the previously established
equivalence between Dykstra's algorithm \eqref{eq:dyk2} and ADMM 
\eqref{eq:admm2} for a 2-set best approximation problem, 
as here $C_0$ is a linear subspace. %, and $(y,\ldots,y) \in C_0$.   

Of course, with ADMM we need not choose probability
weights for $\rho_1,\ldots,\rho_d$, and the convergence in Theorem
\ref{thm:cd_admm_par} is guaranteed for any fixed values of
these parameters.  Thus, even though they were derived from
different perspectives, parallel-ADMM-CD subsumes parallel-Dykstra-CD, 
and it is a strictly more general approach. It is important to note that
larger values of $\rho_1,\ldots,\rho_d$ can often lead to
faster convergence, as we show in numerical experiments in the
supplement.  More detailed study and comparisons to related parallel  
methods are worthwhile, but are beyond the scope of this work. 

\section{Extensions and discussion}
\label{sec:extensions}

We studied connections between Dykstra's algorithm, ADMM, and
coordinate descent, leveraging these connections to establish an  
(asymptotically) linear convergence rate for coordinate descent for
the lasso, as well as two parallel versions of coordinate
descent (one based on Dykstra's algorithm and the other on ADMM). 
Some extensions and possibilities for future work are described
below. 

\paragraph{Nonquadratic loss: Dykstra's algorithm and 
coordinate descent.}  Given a convex function $f$, a  
generalization of \eqref{eq:reg} is the {\it regularized 
estimation problem}
\begin{equation}
\label{eq:reg_gen}
\min_{w \in \R^p} \; f(Xw) + \sum_{i=1}^d h_i(w_i).
\end{equation}
Regularized regression \eqref{eq:reg} is given by
\smash{$f(z)=\half\|y-z\|_2^2$}, and e.g., regularized classification 
(under the logistic loss) by
\smash{$f(z)=-y^Tz+\sum_{i=1}^n\log(1+e^{z_i})$}.  
In (block) coordinate descent for \eqref{eq:reg_gen}, we initialize
say \smash{$w^{(0)}=0$}, and repeat, for $k=1,2,3,\ldots$: 
\begin{equation}
\label{eq:cd_gen}
w_i^{(k)} = \argmin_{w_i \in \R^{p_i}} \; 
f\bigg(\sum_{j < i} X_j w_j^{(k)} + \sum_{j > i} X_jw_j^{(k-1)} + 
X_iw_i\bigg) + h_i (w_i), 
\quad i=1,\ldots,d. 
\end{equation}
On the other hand, given a differentiable and strictly convex function
$g$, we can generalize \eqref{eq:bap} to the following {\it best
  Bregman-approximation problem},
\begin{equation}
\label{eq:bap_gen}
\min_{u \in \R^n} \; D_g(u,b)  
\quad \st \quad u \in C_1 \cap \cdots \cap C_d. 
\end{equation}
where \smash{$D_g(u,b)=g(u) - g(b) - \langle \nabla g(b), u-b\rangle$}
is the {\it Bregman divergence} between $u$ and $b$ with respect to
$g$.  When 
\smash{$g(v)=\half\|v\|_2^2$} (and $b=y$), this recovers the
best approximation problem \eqref{eq:bap}.  As shown in
\citet{censor1998dykstra,bauschke2000dykstra}, Dykstra's algorithm can 
be extended to apply to \eqref{eq:bap_gen}.  We initialize
\smash{$u_d^{(0)}=b$}, \smash{$z_1^{(0)}=\cdots=z_d^{(0)}=0$}, and
repeat for $k=1,2,3,\ldots$:
\begin{equation}
\begin{aligned}
\label{eq:dyk_gen}
&u_0^{(k)} = u_d^{(k-1)}, \\
&\begin{rcases*}
u_i^{(k)} = (P^g_{C_i} \circ \nabla g^*)\Big(\nabla g (u_{i-1}^{(k)})
+ z_i^{(k-1)}\Big), & \\ 
z_i^{(k)} = \nabla g(u_{i-1}^{(k)}) + z_i^{(k-1)} - 
\nabla g(u_i^{(k)}), & 
\end{rcases*}
\quad \text{for $i=1,\ldots,d$},
\end{aligned}
\end{equation}
where \smash{$P_C^g(x)=\argmin_{c \in C} D_g(c,x)$} denotes the
Bregman (rather than Euclidean) projection of $x$ onto a set $C$, and
$g^*$ is the conjugate function of $g$. Though it may not be
immediately obvious, when \smash{$g(v)=\half\|v\|_2^2$} 
the above iterations \eqref{eq:dyk_gen} reduce to the standard
(Euclidean) Dykstra iterations in \eqref{eq:dyk2}.
Furthermore, Dykstra's algorithm and coordinate descent are  
equivalent in the more general setting. 

\begin{theorem}
\label{thm:dyk_cd_equiv_gen}
Let $f$ be a closed, strictly convex, differentiable function.
Assume that $X_i \in \R^{n\times p_i}$ has full
column rank, and \smash{$h_i(v) =\max_{d \in D_i} \langle d,v
  \rangle$} for a closed, convex set $D_i \subseteq \R^{p_i}$
containing 0, for
$i=1,\ldots,d$.  Also, let $g(v)=f^*(-v)$, $b=-\nabla f(0)$, and 
\smash{$C_i=(X_i^T)^{-1}(D_i) \subseteq \R^n$}, $i=1,\ldots,d$. Then
problems  
\eqref{eq:reg_gen},  \eqref{eq:bap_gen} are dual to each other,
and their solutions \smash{$\hat{w},\hat{u}$} satisfy
\smash{$\hat{u}=-\nabla f(X\hat{w})$}. Further, Dykstra's
algorithm \eqref{eq:dyk_gen} and coordinate descent
\eqref{eq:cd_gen} are equivalent, i.e., for $k=1,2,3,\ldots$: 
$$
z_i^{(k)} = X_iw_i^{(k)} \quad\text{and}\quad
u_i^{(k)}=-\nabla f\bigg(\sum_{j \leq i} X_j w_j^{(k)} + \sum_{j > i}  
X_j w_j^{(k-1)}\bigg), \quad\text{for $i=1,\ldots,d$}. 
$$
\end{theorem}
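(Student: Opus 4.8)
The plan is to mirror the two-part structure of the quadratic case (Lemma~\ref{lem:dyk_cd_update} followed by Theorem~\ref{thm:dyk_cd_equiv}), replacing Euclidean geometry by the Bregman geometry induced by $g$. First I would record the conjugacy identities supplied by the hypotheses on $f$. Since $f$ is closed, strictly convex, and differentiable, $f^*$ is differentiable with $\nabla f^* = (\nabla f)^{-1}$, so that $\nabla f^* \circ \nabla f = \Id$; moreover these two conditions make $g(v)=f^*(-v)$ itself strictly convex and differentiable, so that \eqref{eq:bap_gen} and the Bregman projections in \eqref{eq:dyk_gen} are well defined. From $g(v)=f^*(-v)$ one gets $\nabla g(v)=-\nabla f^*(-v)$, and a direct conjugate computation gives $g^*(y)=f(-y)$, hence $\nabla g^*(y)=-\nabla f(-y)$. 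I would also verify the crucial fact $\nabla g(b)=0$: indeed $\nabla g(b)=-\nabla f^*(-b)=-\nabla f^*(\nabla f(0))=0$ since $b=-\nabla f(0)$. This is what collapses the Bregman objective, because as a function of $u$ we have $D_g(u,b)=g(u)-\langle \nabla g(b),u\rangle+\text{const}=g(u)+\text{const}$, so minimizing $D_g(\cdot,b)$ over $C_1\cap\cdots\cap C_d$ is the same as minimizing $g$ over that set.

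For the duality, I would apply Fenchel--Rockafellar to $\min_w f(Xw)+\sum_i h_i(w_i)$. Writing each $h_i$ as the support function of $D_i$ (so $h_i^*=1_{D_i}$), the dual reads $\max_v -f^*(-v)$ subject to $X_i^T v\in D_i$ for all $i$; substituting $g(v)=f^*(-v)$ and using $X_i^T v\in D_i \iff v\in C_i=(X_i^T)^{-1}(D_i)$, this is exactly $\min_v g(v)$ over $v\in C_1\cap\cdots\cap C_d$, which by the previous paragraph is \eqref{eq:bap_gen}. Strong duality together with the primal stationarity condition $0\in X^T\nabla f(X\hat{w})+\partial\sum_i h_i(\hat{w})$ then yields the dual optimum $\hat{u}=-\nabla f(X\hat{w})$.

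The crux is a Bregman analogue of Lemma~\ref{lem:dyk_cd_update}: for any $r\in\R^n$, setting $x=-\nabla f(r)$,
\[
\hat{w}_i=\argmin_{w_i} f(r+X_i w_i)+h_i(w_i) \iff -\nabla f(r+X_i\hat{w}_i)=P^g_{C_i}(x).
\]
I would prove this by matching first-order conditions. The coordinate update gives $X_i^T\hat{c}\in\partial h_i(\hat{w}_i)\subseteq D_i$, where $\hat{c}:=-\nabla f(r+X_i\hat{w}_i)$, so $\hat{c}\in C_i$ and $\langle \hat{w}_i, X_i^T\hat{c}\rangle=h_i(\hat{w}_i)$. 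The Bregman projection is characterized by $\langle \nabla g(\hat{c})-\nabla g(x),\, c-\hat{c}\rangle\geq 0$ for all $c\in C_i$; the conjugacy identities give $\nabla g(x)=-r$ and $\nabla g(\hat{c})=-(r+X_i\hat{w}_i)$, so this inequality reduces to $\langle \hat{w}_i,\, X_i^T c-X_i^T\hat{c}\rangle\leq 0$, which holds because $X_i^T c\in D_i$ while $X_i^T\hat{c}$ attains the support-function maximum over $D_i$ in the direction $\hat{w}_i$. This is the step I expect to be the main obstacle, since it requires threading the conjugate relations through the optimality conditions and recognizing that the exposed-face membership $X_i^T\hat{c}\in\partial h_i(\hat{w}_i)$ is precisely the variational inequality defining $P^g_{C_i}$.

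Finally, I would run a joint induction over $k$ and over $i$ within each sweep to establish $z_i^{(k)}=X_i w_i^{(k)}$ and $u_i^{(k)}=-\nabla f\big(\sum_{j\leq i}X_j w_j^{(k)}+\sum_{j>i}X_j w_j^{(k-1)}\big)$. The base case uses $u_d^{(0)}=b=-\nabla f(0)$ and $z_i^{(0)}=X_i w_i^{(0)}=0$. In the inductive step, the conjugacy identities collapse the Dykstra input $\nabla g^*\big(\nabla g(u_{i-1}^{(k)})+z_i^{(k-1)}\big)$ to $-\nabla f(r_i)$, where $r_i=\sum_{j<i}X_j w_j^{(k)}+\sum_{j>i}X_j w_j^{(k-1)}$; applying the Bregman lemma with this $r_i$ identifies $u_i^{(k)}=-\nabla f(r_i+X_i w_i^{(k)})$, and substituting into the $z$-update gives $z_i^{(k)}=X_i w_i^{(k)}$, closing the induction and hence proving the equivalence.
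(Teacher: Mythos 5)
Your proposal is correct and takes essentially the same route as the paper: the paper likewise proves a Bregman generalization of Lemma \ref{lem:dyk_cd_update} (its Lemma \ref{lem:dyk_cd_update_gen} states $X_i\hat{w}_i = (\nabla g - \nabla g \circ P_{C_i}^g)(\nabla g^*(-a))$, which is exactly your $-\nabla f(r+X_i\hat{w}_i)=P^g_{C_i}(x)$ after threading through the conjugacy identities), establishes duality from $h_i^*=1_{D_i}$ together with the key fact $\nabla g(b)=0$, and finishes by induction over the sweeps. The only differences are cosmetic: the paper proves its lemma by subgradient--conjugate manipulations (rewriting optimality via $h^*_{C_i}=1_{C_i}=1_{D_i}\circ X_i^T$) rather than your direct verification of the variational inequality characterizing $P^g_{C_i}$, and it runs the induction in the opposite direction---parametrizing the Dykstra iterates as $z_i^{(k)}=X_i\tilde{w}_i^{(k)}$, which requires separately showing that the image of $\nabla g-\nabla g\circ P^g_{C_i}$ lies in $\col(X_i)$, whereas your direction starts from the coordinate descent iterates and needs only uniqueness of the Bregman projection.
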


\paragraph{Nonquadratic loss: parallel coordinate descent methods.} 
For a general regularized estimation problem \eqref{eq:reg_gen},
parallel coordinate descent methods can be derived by applying
Dykstra's algorithm and ADMM to a product space reformulation of the
dual.  Interestingly, the subsequent coordinate descent algorithms are
{\it no longer equivalent} (for a unity augmented Lagrangian
parameter), and they feature complementary computational
structures. The Dykstra version has a closed-form $u_0$-update, but
its (parallel) $w$-updates require coordinatewise minimizations
involving the smooth, convex loss $f$.  On the other hand, the ADMM
version admits a more difficult $u_0$-update, but its (parallel)
$w$-updates only require coordinatewise minimizations with a quadratic
loss (this being typically simpler than the corresponding
minimizations for most nonquadratic $f$ of interest).  The supplement
gives details.

\paragraph{Asynchronous parallel algorithms,
  and coordinate descent in Hilbert spaces.}  We finish with some
directions for possible future work.  Asynchronous variants of
parallel coordinate descent are currently of great interest, e.g., see
the review in \citet{wright2015coordinate}.  Given the
link between ADMM and coordinate descent developed in this
paper, it would be interesting to investigate the implications of the 
recent exciting progress on asynchronous ADMM, e.g., see 
\citet{chang2016asynchronousI,chang2016asynchronousII} and references
therein, for coordinate descent.  In a separate direction, much of the
literature on Dykstra's algorithm emphasizes that this method works
seamlessly in Hilbert spaces.  It would be interesting to consider the
connections to (parallel) coordinate descent in infinite-dimensional
function spaces, which we would encounter, e.g., in alternating
conditional expectation algorithms or backfitting algorithms in
additive models. 

\newpage
\begin{center}
{\LARGE Appendix: Proofs, Technical Details, and Numerical
  Experiments} 
\end{center}

\renewcommand\thesection{A.\arabic{section}}   
\renewcommand\thefigure{A.\arabic{figure}}    
\renewcommand\theequation{A.\arabic{equation}}     
\renewcommand\thetheorem{A.\arabic{theorem}}     
\renewcommand\thelemma{A.\arabic{lemma}}     
\renewcommand\theremark{A.\arabic{remark}}  
\setcounter{section}{0}   
\setcounter{figure}{0}
\setcounter{equation}{0}
\setcounter{theorem}{0}
\setcounter{lemma}{0}
\setcounter{remark}{0}

\section{Proofs of Lemma \ref{lem:dyk_cd_update} and Theorem
    \ref{thm:dyk_cd_equiv}}

These results are direct consequences of the more general Lemma 
\ref{lem:dyk_cd_update_gen} and Theorem \ref{thm:dyk_cd_equiv_gen},
respectively, when \smash{$f(v)=\half\|y-v\|_2^2$} (and so
\smash{$f^*(v)=-\half\|y\|_2^2+\half\|y+v\|_2^2$}); see Section  
\ref{sec:dyk_cd_gen} below for their proofs.   

\section{Dykstra's algorithm and ADMM for the $d$-set best
  approximation and set intersection problems}

Here we show that, under an inertial-type modification, the ADMM 
iterations for \eqref{eq:sip} are in a certain limiting sense
equivalent to Dykstra's iterations for \eqref{eq:bap}.  We can
introducing auxiliary variables to transform problem \eqref{eq:bap}
into   
$$
\min_{u_0,\ldots,u_d \in \R^n} \; \sum_{i=1}^d
1_{C_i}(u) \quad \st \quad u_d=u_0, u_0=u_1, \ldots, u_{d-1}=u_d,
$$
% (The particular formulation of the equality constraints is critical
% in equating the resulting ADMM algorithm with Dykstra's algorithm.)   
and the corresponding augmented Lagrangian is 
\smash{$L(u_0,\ldots,u_d,z_0,\ldots,z_d) =
\rho_0 \|u_d-u_0+z_0\|_2^2+{}$} \smash{$\sum_{i=1}^d  
(1_{C_i}(u)+ \rho_i \|u_{i-1}-u_i+z_i\|_2^2)$},
where $\rho_0,\ldots,\rho_d>0$ are augmented Lagrangian parameters.
ADMM is defined by repeating the updates:
\begin{align*}
u_i^{(k)} &= \argmin_{u_i \in \R^n} \;
L(u_0^{(k)},\ldots,u_{i-1}^{(k)},u_i,u_{i+1}^{(k-1)},\ldots,u_d^{(k-1)}),
\quad i=0,\ldots,d, \\
z_i^{(k)} &= z_i^{(k-1)} + u_{i-1}^{(k)} - u_i^{(k)}, 
\quad i=0,\ldots,d, 
\end{align*}
for $k=1,2,3,\ldots$,
where we use \smash{$u_{-1}^{(k)}=u_d^{(k)}$} for convenience.  Now
consider an inertial modification in which, for the $u_0$ update
above, we add the term \smash{$\|u_0-u_d^{(k-1)}\|_2^2$} to the
augmented Lagrangian in the 
minimization.  A straightforward derivation then leads to the
ADMM updates:
\begin{equation}
\begin{aligned}
\label{eq:admm3}
u_0^{(k)} &= \frac{u_d^{(k-1)}}{1+\rho_0+\rho_1} + 
\frac{\rho_0(u_d^{(k-1)}+z_0^{(k-1)})}
{1+\rho_0+\rho_1} + \frac{\rho_1(u_1^{(k-1)}-z_1^{(k-1)})} 
{1+\rho_0+\rho_1}, \\
u_i^{(k)} &= P_{C_i} \bigg(
\frac{(u_{i-1}^{(k)}+z_i^{(k-1)})}{1+\rho_{i+1}/\rho_i} +
\frac{(\rho_{i+1}/\rho_i)(u_{i+1}^{(k-1)}-z_{i+1}^{(k-1)})}{1+\rho_{i+1}/\rho_i}   
\bigg), \quad i=1,\ldots,d-1, \\
u_d^{(k)} &= P_{C_d} \bigg(
\frac{(u_{d-1}^{(k)}+z_d^{(k-1)})}{1+\rho_0/\rho_d} +
\frac{(\rho_0/\rho_d)(u_0^{(k)}-z_0^{(k-1)})}{1+\rho_0/\rho_d} \bigg),
\\ z_i^{(k)} &= z_i^{(k-1)} + u_{i-1}^{(k)} - u_i^{(k)},  
\quad i=0,\ldots,d,  
\end{aligned}
\end{equation}
for $k=1,2,3,\ldots$. 
Under the choices \smash{$\rho_0 = \alpha^{d+1}$} and 
\smash{$\rho_i=\alpha^i$}, $i=1,\ldots,d$, we see that as 
$\alpha \to 0$ the ADMM iterations \eqref{eq:admm3} exactly coincide 
with the Dykstra iterations \eqref{eq:dyk2}.  Thus, under the proper
initializations, 
\smash{$u_d^{(0)}=y$} and
\smash{$z_0^{(0)}=\cdots=z_d^{(0)}=0$},
the limiting ADMM algorithm for \eqref{eq:sip} matches Dykstra's
algorithm for \eqref{eq:bap}.  

Similar arguments can be used equate ADMM for \eqref{eq:bap} to
Dykstra's algorithm, again in limiting
sense.  We rewrite \eqref{eq:bap} as
$$
\min_{u_0,\ldots,u_d \in \R^n} \; \|y-u_0\|_2^2 + \sum_{i=1}^d
1_{C_i}(u) \quad \st \quad u_d=u_0, u_0=u_1, \ldots, u_{d-1}=u_d.
$$
Using an inertial modification for the $u_0$ update, where we now add
the term \smash{$\rho_{-1}\|u_0-u_d^{(k-1)}\|_2^2$} to the augmented
Lagrangian in the minimization, the ADMM updates become:
\begin{equation}
\begin{aligned}
\label{eq:admm4}
u_0^{(k)} &= \frac{y}{1+\rho_{-1} + \rho_0+\rho_1} +
\frac{\rho_{-1} u_d^{(k-1)}}{1+\rho_{-1} + \rho_0+\rho_1} + 
\frac{\rho_0(u_d^{(k-1)}+z_0^{(k-1)})}
{1+\rho_{-1} + \rho_0+\rho_1}+ \frac{\rho_1(u_1^{(k-1)}-z_1^{(k-1)})} 
{1+\rho_{-1} + \rho_0+\rho_1}, \\
u_i^{(k)} &= P_{C_i} \bigg(
\frac{(u_{i-1}^{(k)}+z_i^{(k-1)})}{1+\rho_{i+1}/\rho_i} +
\frac{(\rho_{i+1}/\rho_i)(u_{i+1}^{(k-1)}-z_{i+1}^{(k-1)})}{1+\rho_{i+1}/\rho_i}   
\bigg), \quad i=1,\ldots,d-1, \\
u_d^{(k)} &= P_{C_d} \bigg(
\frac{(u_{d-1}^{(k)}+z_d^{(k-1)})}{1+\rho_0/\rho_d} +
\frac{(\rho_0/\rho_d)(u_0^{(k)}-z_0^{(k-1)})}{1+\rho_0/\rho_d} \bigg),
\\ z_i^{(k)} &= z_i^{(k-1)} + u_{i-1}^{(k)} - u_i^{(k)},  
\quad i=0,\ldots,d,  
\end{aligned}
\end{equation}
for $k=1,2,3,\ldots$.  Setting \smash{$\rho_{-1}=\alpha^{d+1}$},
\smash{$\rho_0=1$}, and \smash{$\rho_i=\alpha^{d+1-i}$},
$i=1,\ldots,d$, we can see that as $\alpha \to \infty$, the ADMM
iterations \eqref{eq:admm4} converge to the Dykstra iterations 
\eqref{eq:dyk2}, and therefore with initializations
\smash{$u_d^{(0)}=y$} and
\smash{$z_0^{(0)}=\cdots=z_d^{(0)}=0$},
the limiting ADMM algorithm for \eqref{eq:bap} coincides with
Dykstra's algorithm for the same problem.

The links above between ADMM and Dykstra's algorithm are
intended to be of conceptual interest, and the ADMM algorithms  
\eqref{eq:admm3},  
\eqref{eq:admm4} may not be practically useful for arbitrary
configurations of the augmented Lagrangian parameters.  After all,
both of these are multi-block ADMM approaches, and multi-block ADMM
has subtle convergence behavior as studied,
e.g., in \citet{lin2015global,chen2016direct}.

\section{Proof of Theorem \ref{thm:cd_lasso_rate_ip}}
\label{sec:cd_lasso_rate_ip}

By Theorem \ref{thm:dyk_cd_equiv}, we know that coordinate
descent applied to the lasso problem \eqref{eq:lasso} is equivalent to
Dykstra's algorithm on the best approximation problem \eqref{eq:bap},
with \smash{$C_i=\{v \in \R^n: |X_i^T
  v| \leq \lambda\}$}, for $i=1,\ldots,p$.  In particular, at
the end of the $k$th iteration, it holds that
$$
u_p^{(k)} = y - Xw^{(k)}, \quad \text{for $k=1,2,3,\ldots$}. 
$$
By duality, we also have \smash{$\hat{u}=y-X\hat{w}$} at the solutions
\smash{$\hat{u},\hat{w}$} in \eqref{eq:bap}, \eqref{eq:lasso}, 
respectively.  Therefore any statement about the convergence of  
Dykstra's iterates may be translated into a statement about the
convergence of the coordinate descent iterates, via the relationship 
\begin{equation}
\label{eq:dyk_cd_relationship}
\|u^{(k)}-\hat{u}\|_2 = \|Xw^{(k)}-X\hat{w}\|_2 =
\|w^{(k)}-\hat{w}\|_\Sigma,
\quad \text{for $k=1,2,3,\ldots$}.
\end{equation}
We seek to apply the main result from
\citet{iusem1990convergence}, on   
the asymptotic convergence rate of Dykstra's (Hildreth's) algorithm  
for projecting onto a polyhedron.  One slight complication is that,
in the current paramterization, coordinate descent is equivalent to
Dykstra's algorithm on 
$$
C_1 \cap \ldots \cap C_p = \bigcap_{i=1}^p \{v \in \R^n : |X_i^T v|
\leq \lambda\}, 
$$
While polyhedral, the above is not explicitly an intersection of 
halfspaces (it is an intersection of slabs), which is the setup
required by the analysis of \citet{iusem1990convergence}.
Of course, we can simply define
\smash{$C_i^+=\{v \in \R^n: X_i^T v \leq \lambda\}$} and
\smash{$C_i^-=\{v \in \R^n: X_i^T v \geq -\lambda\}$}, $i=1,\ldots,p$,
and then the above intersection is equivalent to 
$$
C_1^+ \cap C_1^- \cap \ldots \cap C_p^+ \cap C_p^- = 
\bigcap_{i=1}^p \Big( \{v \in \R^n : X_i^T v \leq \lambda\} \cap  
\{v \in \R^n : X_i^T v \geq -\lambda\} \Big).
$$
Moreover, one can check that the iterates from Dykstra's algorithm on  
\smash{$C_1^+ \cap C_1^- \cap \ldots \cap C_p^+ \cap C_p^-$} 
match\footnote{By this we mean that
\smash{$u_i^{-,(k)}=u_i^{(k)}$} for all $i=1,\ldots,p$ and
$k=1,2,3,\ldots$, if the iterates from Dykstra's algorithm on 
\smash{$C_1^+ \cap C_1^- \cap \ldots \cap C_p^+ \cap C_p^-$} are
denoted as \smash{$u^{+,(k)}_i, u^{-,(k)}_i$}, $i=1,\ldots,p$.} those 
from Dykstra's algorithm on \smash{$C_1 \cap \ldots \cap C_p$},
provided that the algorithms cycle over the sets in the order
they are written in these intersections.  This means that the
analysis of \cite{iusem1990convergence} can be applied to coordinate
descent for the lasso. 

The error constant from Theorem 1 in
\citet{iusem1990convergence} is based on a geometric quantity that we 
explicitly lower bound below.  It is not clear to us
whether our lower bound is the best possible, and a better lower
bound would improve the error constant presented in Theorem 
\ref{thm:cd_lasso_rate_ip}.  

\begin{lemma}
\label{lem:mu_lower_bound}
Let $H_i=\{x \in \R^n : h_i^T x = b_i\}$, $i=1,\ldots,s$ be hyperplanes,
and $S=H_1 \cap \ldots \cap H_s$ the $s$-dimensional affine subspace
formed by their intersection.  For each $x \in \R^n$, denote by $H_x$
the hyperplane among $H_1,\ldots,H_s$ farthest from $x$.  Define 
$$
\mu = \inf_{x \in \R^n} \; \frac{d(x,H_x)}{d(x,S)}, 
$$
where \smash{$d(x,S) = \inf_{y \in S} \|x-y\|_2$} is the 
distance between $x$ and $S$, and similarly for $d(x,H_x)$.  Then 
$$
\mu \geq \frac{\sigma_{\min}(M)}{\sqrt{s} 
\max_{i=1,\ldots,s} \|h_i\|_2} > 0, 
$$
where $M \in \R^{n \times s}$ has columns
$h_1,\ldots,h_s$, and \smash{$\sigma_{\min}(M)$} is its smallest
nonzero singular value.
\end{lemma}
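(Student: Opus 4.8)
The plan is to obtain the lower bound by relating the geometric quantity $\mu$ to the linear algebra of the matrix $M$ whose columns are the hyperplane normals $h_1,\ldots,h_s$. The key observation is that for any point $x$, the distance $d(x,H_i)$ to the $i$th hyperplane is exactly $|h_i^T x - b_i|/\|h_i\|_2$, while the distance $d(x,S)$ to the intersection $S$ measures how far the residual vector $r = (h_1^T x - b_i, \ldots, h_s^T x - b_s)^T \in \R^s$ is from being reducible to zero via movement in $\R^n$. Since $\mu$ is scale- and translation-invariant in the obvious way, I would first reduce to studying the ratio $\max_i |h_i^T x - b_i|/\|h_i\|_2$ against $d(x,S)$, and note the numerator $d(x,H_x)$ is the maximum of the per-hyperplane distances, so $d(x,H_x) \geq \|r\|_2/(\sqrt{s}\,\max_i\|h_i\|_2)$ by the standard $\ell_\infty$–$\ell_2$ comparison on the residual vector (the max dominates the average).

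The main step is then to bound $d(x,S)$ from above by $\|r\|_2/\sigma_{\min}(M)$. Writing $x = x_S + x^\perp$ where $x_S$ is the projection of $x$ onto $S$, the residual satisfies $h_i^T x - b_i = h_i^T(x - x_S) = h_i^T x^\perp$, so $r = M^T x^\perp$. Because $x^\perp$ lies in the orthogonal complement of the direction space of $S$, which is precisely the row space of $M^T$ (equivalently the column space of $M$), the vector $x^\perp$ has no component in the null space of $M^T$; hence $\|r\|_2 = \|M^T x^\perp\|_2 \geq \sigma_{\min}(M)\,\|x^\perp\|_2 = \sigma_{\min}(M)\,d(x,S)$, where $\sigma_{\min}$ is the smallest \emph{nonzero} singular value. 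Combining the two inequalities gives
$$
\frac{d(x,H_x)}{d(x,S)} \;\geq\; \frac{\|r\|_2/(\sqrt{s}\,\max_i\|h_i\|_2)}{\|r\|_2/\sigma_{\min}(M)} \;=\; \frac{\sigma_{\min}(M)}{\sqrt{s}\,\max_i\|h_i\|_2},
$$
and taking the infimum over $x$ yields the claimed bound, which is strictly positive since $\sigma_{\min}(M) > 0$.

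The step I expect to require the most care is the identification of the direction space of $S$ with the column space of $M$, and correspondingly the restriction to the \emph{nonzero} smallest singular value rather than $\sigma_{\min}$ over all of $\R^s$. The normals $h_1,\ldots,h_s$ need not be linearly independent, so $M$ may be rank-deficient; the point is that the perpendicular part $x^\perp$ automatically avoids $\nul(M^T)$, which is what licenses using $\sigma_{\min}(M)$ restricted to its nonzero spectrum. I would make this precise by decomposing $x^\perp$ in an orthonormal eigenbasis of $M M^T$ and observing that only the eigenvectors with positive eigenvalues contribute, so the Rayleigh-quotient bound $\|M^T x^\perp\|_2 \geq \sigma_{\min}(M)\|x^\perp\|_2$ holds on exactly the subspace where $x^\perp$ lives. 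Everything else is the routine $\ell_\infty \geq \ell_2/\sqrt{s}$ inequality and the elementary point-to-hyperplane distance formula.
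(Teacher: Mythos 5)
Your proposal is correct and takes essentially the same route as the paper's proof: both rest on the two bounds $d(x,H_x) \geq \|b - M^T x\|_2/(\sqrt{s}\,\max_{i}\|h_i\|_2)$ (point-to-hyperplane formula plus the $\ell_\infty$--$\ell_2$ comparison) and $d(x,S) \leq \|b - M^T x\|_2/\sigma_{\min}(M)$, whose ratio gives the claim. The only difference is presentational: the paper obtains the second bound by citing the pseudoinverse distance formula $d(x,S)=\|(M^T)^+(b-M^T x)\|_2$ together with $\sigma_{\max}(M^+)=1/\sigma_{\min}(M)$, whereas you re-derive the same inequality from the decomposition $x = x_S + x^\perp$ with $x^\perp \in \col(M)$ --- which is exactly what the pseudoinverse formula encodes, and has the minor virtue of making the treatment of rank-deficient $M$ explicit.
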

\begin{proof}
For any $x \in \R^n$, note that \smash{$d(x,S)=\|M^+(b-M^T x)\|_2$},
where $M^+$ is the Moore-Penrose pseudoinverse of $M$.  Also,  
\smash{$d(x,H_x)=\max_{i=1,\ldots,s} |b_i-h_i^T x|/\|h_i\|_2$}. Hence,
writing \smash{$\sigma_{\max}(M^+)$} for the maximum singular value of 
$M^+$, 
\begin{align*}
\frac{d(x,H_x)}{d(x,S)} &\geq 
\frac{\max_{i=1,\ldots,s} |b_i-h_i^T x|/\|h_i\|_2}
{\sigma_{\max}(M^+) \|b-M^T x\|_2} \\
&\geq \frac{\sigma_{\min}(M)}{\max_{i=1,\ldots,s} \|h_i\|_2}  
\frac{\max_{i=1,\ldots,s} |b_i-h_i^T x|}{\|b-M^T x\|_2} \\
&\geq \frac{\sigma_{\min}(M)}{\sqrt{s} \max_{i=1,\ldots,s} \|h_i\|_2}, 
\end{align*}
where we have used the fact that
\smash{$\sigma_{\max}(M^+)=1/\sigma_{\min}(M)$}, as well as 
\smash{$\|v\|_\infty/\|v\|_2 \geq 1/\sqrt{s}$} for all vectors $v \in 
\R^s$.  Taking an infimum over $x \in \R^n$ establishes the
result. 
\end{proof}

Now we adapt and refine the result in Theorem 1 from
\citet{iusem1990convergence}.  These authors show that for large
enough $k$, 
$$
\frac{\|u_p^{(k+1)} - \hat{u}\|_2}{\|u_p^{(k)} - \hat{u}\|_2} \leq 
\bigg(\frac{1}{1+\sigma}\bigg)^{1/2},
$$
where $\sigma = \mu^2/p$, and $\mu>0$ is defined as follows.  
Let \smash{$A=\{ i \in \{1,\ldots,p\} : |X_i^T 
  \hat{u}|=\lambda\}$}, and let \smash{$\rho=\sign(X_A^T \hat{u})$}.   
Also let
$$
H_i=\{v \in \R^n : X_i^T v = \rho_i \lambda\}, \quad
i \in A,
$$
as well as \smash{$S=\cap_{i \in A} H_i$}.  Then  
$$
\mu = \inf_{x \in \R^n} \; \frac{d(x,H_x)}{d(x,S)}, 
$$
where for each $x \in \R^n$, we denote by $H_x$ the hyperplane among
$H_i$, $i \in A$ farthest from $x$.  

In the nomenclature of the lasso
problem, the set $A$ here is known as the {\it
  equicorrelation set}.  The general position assumption on $X$
implies that the lasso \smash{$\hat{w}$} solution is unique, and that
(for almost every in $y \in \R^n$), the equicorrelation set and
support of \smash{$\hat{w}$} are equal, so we can write
\smash{$A=\supp(\hat{w})$}.  See \citet{tibshirani2013lasso}.   

From Lemma \ref{lem:mu_lower_bound}, we have that \smash{$\mu^2 \geq
  \lambda_{\min}(X_A^T X_A)/(a \max_{i \in A} \|X_i\|^2_2)$}, where
$a=|A|$, and so   
$$
\bigg(\frac{1}{1+\sigma}\bigg)^{1/2} \leq 
\bigg(\frac{pa}{pa+\lambda_{\min}(X_A^T X_A)/
\max_{i \in A} \|X_i\|^2_2}\bigg)^{1/2}.
$$
This is almost the desired result, but it is weaker, because of its
dependence on $pa$ rather than $a^2$.  Careful inspection of the
proof of Theorem 1 in \citet{iusem1990convergence} shows that the
factor of $p$ in the constant $\sigma=\mu^2/p$ comes from an
application of Cauchy-Schwartz, to derive an upper bound of
the form (translated to our notation):
$$
\bigg(\sum_{i=1}^{p-1} \|u_{i+1}^{(k)}-u_i^{(k)}\|_2 \bigg)^2 \leq 
p \sum_{i=1}^{p-1} \|u_{i+1}^{(k)}-u_i^{(k)}\|_2^2.
$$
See their equation (33) (in which, we note, there is a typo: the
entire summation should be squared).  However, in
the summation on the left above, at most $a$ of the above terms are
zero. This is true as 
\smash{$u_{i+1}^{(k)}-u_i^{(k)}=X_{i+1}w_{i+1}^{(k)}-X_{i+1}w_{i+1}^{(k-1)}$},
$i=1,\ldots,p-1$, and for large enough values of $k$, as considered by 
these authors, we will have \smash{$w_i^{(k)}=0$} for all $i \notin
A$, as shown in Lemma 1 by \citet{iusem1990convergence}.  Thus the
last display can be sharpened to 
$$
\bigg(\sum_{i=1}^{p-1} \|u_{i+1}^{(k)}-u_i^{(k)}\|_2 \bigg)^2 \leq 
a \sum_{i=1}^{p-1} \|u_{i+1}^{(k)}-u_i^{(k)}\|_2^2,
$$
which allows to define $\sigma=\mu^2/a$.  Retracing through the steps
above to upper bound \smash{$(1/1+\sigma)^{1/2}$}, and applying
\eqref{eq:dyk_cd_relationship}, then leads to the result as stated in
the theorem. 

\section{Proof of Theorem \ref{thm:cd_lasso_rate_dh}}

As in the proof of Theorem \ref{thm:cd_lasso_rate_ip}, we observe that  
the relationship \eqref{eq:dyk_cd_relationship} between the 
Dykstra iterates and coordinate descent iterates allows us to turn a  
statement about the convergence of the latter into one about
convergence of the former. We consider Theorem 3.8 in
\citet{deutsch1994rate}, on the asymptotically linear convergence rate
of Dykstra's (Hildreth's) algorithm for projecting onto an
intersection of halfspaces (we note here, as explained in the proof of
Theorem \ref{thm:cd_lasso_rate_ip}, that coordinate descent for the
lasso can be equated to Dykstra's algorithm on halfspaces, even though
in the original dual formulation, Dykstra's algorithm operates on
slabs). 

Though the error constant is not explicitly written in the statement
of Theorem 3.8 in \citet{deutsch1994rate}\footnote{The result in
  Theorem 3.8 of \citet{deutsch1994rate} is actually written in
  nonasymptotic form, 
  i.e., it is stated (translated to our notation) that
\smash{$\|u_d^{(k)}-\hat{u}\|_2 \leq \rho c^k$}, for some constants
$\rho>0$ and $0<c<1$, and all iterations $k=1,2,3,\ldots$. The error
constant $c$ can be explicitly characterized, as we show in the proof
of Theorem 
\ref{thm:cd_lasso_rate_dh}. But the constant $\rho$ cannot be, and in
fact, the nonasymptotic error bound in \citet{deutsch1994rate} is
really nothing more than a restatement of the more precise asymptotic
error bound developed in the proofs of their Lemma 3.7 and Theorem
3.8. Loosely put, any asymptotic error bound can be transformed into a  
nonasymptotic one by simply defining a problem-specific constant
$\rho$ to be large enough that it makes the bound valid until the
asymptotics kick in. This describes the strategy taken in
\citet{deutsch1994rate}.}, the proofs of Lemma 3.7 and Theorem 3.8
from these authors reveals the following.  Define \smash{$A=\{ i \in
  \{1,\ldots,p\} : |X_i^T \hat{u}|=\lambda\}$}, and enumerate
$A=\{i_1,\ldots,i_a\}$ with $i_1<\ldots<i_a$.  
As in the proof of Theorem \ref{thm:cd_lasso_rate_ip}, we note that
the general position assumption on $X$ allows us to write (almost
everywhere in $y \in \R^n$) \smash{$A=\supp(\hat{w})$}, for the unique
lasso solution \smash{$\hat{w}$}. Also define 
$$
H_{i_j}=\{v \in \R^n : X_{i_j}^T v = 0\}, \quad \text{for
  $j=1,\ldots,a$}. 
$$
\citet{deutsch1994rate} show that, for large enough $k$,
\begin{equation}
\label{eq:asymp_linear_dh}
\frac{\|u_p^{(k+1)} - \hat{u}\|_2}{\|u_p^{(k)} - \hat{u}\|_2} \leq 
\max_{\substack{B \subseteq A, \\
B=\{\ell_1,\ldots,\ell_b\}, \\
\ell_1 < \ldots < \ell_b}} \Bigg(
1-\prod_{j=1}^{b-1} 
\Big(1-c^2\big( H_{\ell_j}, H_{\ell_{j+1}} \cap \cdots \cap
H_{\ell_b}\big) \Big) \Bigg), 
\end{equation}
where $c(L,M)$ denotes the cosine of the angle between linear
subspaces $L,M$.  Now, to simplify the bound on the right-hand side
above, we make two observations.  First, we observe that in
general \smash{$c(L,M)=c(L^\perp,M^\perp)$} (as in, e.g., Theorem
3.5 of \citealt{deutsch1994rate}), so we have
\begin{align*}
c\Big( H_{\ell_j}, H_{\ell_{j+1}} \cap \cdots \cap H_{\ell_b} \Big) &=
c\Big( H_{\ell_j}^\perp, (H_{\ell_{j+1}} \cap \cdots \cap
H_{\ell_b})^\perp \Big) \\ &= 
c\Big(\col(X_{\ell_j}),\col(X_{\{\ell_{j+1},\ldots,\ell_b\}})\Big) \\
&= \frac{\|P_{\{\ell_{j+1},\ldots,\ell_b\}} X_{\ell_j}\|_2}
{\|X_{\ell_j}\|_2}, 
\end{align*}
where in the last line we used that the cosine of the angle between
subspaces has an explicit form, when one of these subspaces is
1-dimensional.  Second, we observe that the maximum in
\eqref{eq:asymp_linear_dh} is actually achieved at $B=A$, since the
cosine of the angle between a 1-dimensional subspace and a second
subspace can only increase when the second subspace is made larger. 
Putting these two facts together, and using
\eqref{eq:dyk_cd_relationship}, establishes the result in the
theorem. 

\section{Derivation details for \eqref{eq:dyk_par},
  \eqref{eq:cd_dyk_par}
and proof of Theorem \ref{thm:cd_dyk_par}} 
\label{sec:cd_dyk_par}

By rescaling, problem \eqref{eq:bap_prod} can be written as
\begin{equation}
\label{eq:bap_prod2}
\min_{\tilde{u} \in \R^{nd}} \;
\|\tilde{y}-\tilde{u}\|_2^2 
\quad \st \quad \tilde{u} \in \tilde{C}_0 \cap (\tilde{C}_1 \times
\cdots \times \tilde{C}_d),  
\end{equation}
where \smash{$\tilde{y}=(\sqrt\gamma_1 y, \ldots, \sqrt\gamma_d y)
  \in \R^{nd}$}, and
$$
\tilde{C}_0 = \{ (v_1,\ldots,v_d) \in \R^{nd} :
v_1/\sqrt\gamma_1 = \cdots = v_d/\sqrt\gamma_d\}
\quad\text{and}\quad 
\tilde{C}_i = \sqrt\gamma_i C_i, \quad \text{for $i=1,\ldots,d$}. 
$$
The iterations in \eqref{eq:dyk_par} then follow by applying
Dykstra's algorithm to \eqref{eq:bap_prod2}, transforming the iterates
back to the original scale (so that the projections are all in terms
of $C_0,C_1,\ldots,C_d$), and recognizing
that the sequence say \smash{$z_0^{(k)}$}, $k=1,2,3,\ldots$ that
would usually accompany \smash{$u_0^{(k)}$}, $k=1,2,3,\ldots$ is not 
needed because $C_0$ is a linear subspace. 

As for the representation \eqref{eq:cd_dyk_par}, it can be verified
via a simple inductive argument that the Dykstra iterates in
\eqref{eq:dyk_par} satisfy, for all $k=1,2,3,\ldots$, 
$$
u_0^{(k)} = y - \sum_{i=1}^d \gamma_i z_i^{(k-1)}, 
\quad i=1,\ldots,d.
$$
Also, as shown in the proof of Theorem \ref{thm:dyk_cd_equiv_gen} in  
Section \ref{sec:dyk_cd_gen} below,
the image of the residual projection operator \smash{$\Id-P_{C_i}$} is
contained in the column span of $X_i$, for each $i=1,\ldots,d$.  This
means that we can parametrize the Dykstra iterates, for
$k=1,2,3,\ldots$, as 
$$
u_0^{(k)} = y - \sum_{i=1}^d \gamma_i X_i\tilde{w}_i^{(k-1)} 
\quad \text{and} \quad   
z^{(k)}_i = X_i\tilde{w}_i^{(k)}, \quad i=1,\ldots,d, 
$$
for some sequence \smash{$\tilde{w}_i^{(k)}$}, $i=1,\ldots,d$, and
$k=1,2,3,\ldots$.  The $z$-updates in \eqref{eq:dyk_par} then become  
$$
X_i \tilde{w}_i^{(k)} =
(\Id-P_{C_i})(u_0^{(k)}+X_i\tilde{w}_i^{(k-1)}), 
\quad i=1,\ldots,d,
$$
and by Lemma \ref{lem:dyk_cd_update}, this is equivalent to 
$$
\tilde{w}_i^{(k)} = \argmin_{\tilde{w}_i \in \R^{p_i}} \;  
\half \|u_0^{(k)}+X_i\tilde{w}_i^{(k-1)} - X_i\tilde{w}_i\|_2^2 +
h_i(\tilde{w}_i), \quad i=1,\ldots,d.
$$
Rescaling once more, to $w_i^{(k)}=\gamma_i\tilde{w}_i^{(k)}$,
$i=1,\ldots,d$ and $k=1,2,3,\ldots$, gives the iterations
\eqref{eq:cd_dyk_par}. 

Lastly, we give a proof of Theorem \ref{thm:cd_dyk_par}.  We can write
the second set in the 2-set best approximation problem 
\eqref{eq:bap_prod2} as
$$
\tilde{C}_1 \times \cdots \times \tilde{C}_d 
= (M^T)^{-1} \Big( \tilde{D}_1 \times \cdots \times
\tilde{D}_d \Big),   
$$
where \smash{$\tilde{D}_i=\sqrt\gamma_i D_i$}, $i=1,\ldots,d$, and 
$$
M = \left(\begin{array}{cccc} 
X_1 & 0 & \ldots & 0 \\ 
0 & X_2 & \ldots & 0 \\
\vdots & & & \\
0 & 0 & \ldots & X_d \end{array}\right) \in \R^{nd \times p}. 
$$
The duality result established in Theorem
\ref{thm:dyk_cd_equiv} can now be applied directly to 
\eqref{eq:bap_prod2}.  (We note that the conditions of the theorem are
met because the matrix $M$, as defined above, has full column rank as
each $X_i$, $i=1,\ldots,d$ does.) Writing \smash{$h_S(v)=\max_{s \in S}
  \langle s,v \rangle$} for the support function of a set $S$, the
theorem tells us that the dual of \eqref{eq:bap_prod2} is
\begin{equation}
\label{eq:reg_prod}
\min_{\tilde{w} \in \R^p, \; \tilde\alpha \in \R^{nd}} \; 
\half \|\tilde{y}- M\tilde{w} - \tilde\alpha\|_2^2 + 
h_{\tilde{D}_1 \times \cdots \times \tilde{D}_d}(\tilde{w}) + 
h_{\tilde{C}_0}(\tilde\alpha),
\end{equation}
and the solutions in \eqref{eq:bap_prod2} and \eqref{eq:reg_prod}, 
denoted by \smash{$\tilde{u}^*$} and
\smash{$\tilde{w}^*,\tilde\alpha^*$} respectively, are related by
\begin{equation}
\label{eq:pd_relationship}
\tilde{u}_i^*=\sqrt\gamma_i y - X_i\tilde{w}_i^* - \tilde\alpha_i^*, 
\quad i=1,\ldots,d.
\end{equation}
Rescaling to
\smash{$(\bar{w}_1,\ldots,\bar{w}_d)=
(\tilde{w}_1/\sqrt\gamma_1,\ldots,\tilde{w}_d/\sqrt\gamma_d)$} 
and 
\smash{$(\bar\alpha_1,\ldots,\bar\alpha_d)=
(\tilde\alpha_1/\sqrt\gamma_1,\ldots,\tilde\alpha_d/\sqrt\gamma_d)$},  
the problem \eqref{eq:reg_prod} becomes 
\begin{align*}
&\min_{\bar{w} \in \R^p, \; \bar\alpha \in \R^{nd}} \;
\half \sum_{i=1}^d \gamma_i \|y - X_i\bar{w}_i - \bar\alpha_i \|_2^2 +    
\sum_{i=1}^d h_i (\gamma_i \bar{w}_i) +
       h_{C_0}(\gamma_1\bar\alpha_1,\ldots,\gamma_d\bar\alpha_d) \\  
\iff &\min_{\bar{w} \in \R^p, \; \bar\alpha \in \R^{nd}} \; 
\half \sum_{i=1}^d \gamma_i \|y - X_i\bar{w}_i - \bar\alpha_i \|_2^2 +    
\sum_{i=1}^d h_i (\gamma_i \bar{w}_i) \;\; \st \;\; \sum_{i=1}^d
       \gamma_i \bar\alpha_i = 0 \\  
\iff &\quad\;\, \min_{\bar{w} \in \R^p} \; 
\half \bigg\|y - \sum_{i=1}^d \gamma_i X_i\bar{w}_i \bigg\|_2^2 + 
       \sum_{i=1}^d h_i (\gamma_i\bar{w}_i).
\end{align*}
In the second line we rewrote the support function of
\smash{$D_1 \times \cdots \times D_d$} as a sum and that of $C_0$ as  
a constraint; in the third line we optimized over 
\smash{$\bar\alpha$} and used \smash{$\sum_{i=1}^d \gamma_i=1$}.
Clearly, the problem in the last display is exactly the 
regularized regression problem \eqref{eq:reg} after another rescaling, 
\smash{$(w_1,\ldots,w_d)=$}
\smash{$(\gamma_1\bar{w}_1,\ldots,\gamma_d\bar{w}_d)$}. 
That the solutions in \eqref{eq:bap_prod2}, \eqref{eq:reg_prod} are
related by \eqref{eq:pd_relationship} implies that the solutions  
\smash{$\hat{u},\hat{w}$} in \eqref{eq:bap_prod}, \eqref{eq:reg} are
related by  
$$
\hat{u}_1 = \cdots = \hat{u}_d = y - X\hat{w}.
$$

By Lemma 4.9 in \citet{han1988successive}, we know that when 
\eqref{eq:reg_prod} has a unique solution, the dual iterates in
Dykstra's algorithm for \eqref{eq:bap_prod2} converge to the solution
in \eqref{eq:reg_prod}.  Equivalently, when \eqref{eq:reg} has a
unique solution, the iterates \smash{$w^{(k)}$}, $k=1,2,3,\ldots$ in  
\eqref{eq:cd_dyk_par} converge to the solution in \eqref{eq:reg}. 

Also, by Theorem 4.7 in \citet{han1988successive}, if 
$$
\setint \bigcap_{i=1}^d (X_i^T)^{-1} (D_i) \not= \emptyset, 
$$ 
then the sequence \smash{$w^{(k)}$}, $k=1,2,3,\ldots$ produced by 
\eqref{eq:cd_dyk_par} has at least one accumulation point, and each
accumulation point solves \eqref{eq:reg}.  Moreover, the sequence 
\smash{$Xw^{(k)}$}, $k=1,2,3,\ldots$ converges to \smash{$X\hat{w}$},
the unique fitted value at optimality in \eqref{eq:reg}.
In fact, Theorem 4.8 in \citet{han1988successive} shows that a weaker
condition can be used when some of the sets are
polyhedral.  In particular, if $D_1,\ldots,D_q$ are polyhedral, then
the condition in the above display can be weakened to 
$$
(X_1^T)^{-1}(D_1) \cap \cdots \cap (X_q^T)^{-1}(D_q) \cap 
\setint (X_{q+1}^T)^{-1}(D_{q+1}) \cap \cdots \cap \setint
(X_d^T)^{-1} (D_d) \not= \emptyset, 
$$
and the same conclusion applies.

\section{Asymptotic linear convergence of the parallel-Dykstra-CD 
  iterations for the lasso problem}

Here we state and prove a result on the convergence rate of
the parallel-Dykstra-CD iterations \eqref{eq:cd_dyk_par} for the lasso
problem \eqref{eq:lasso}.  

\begin{theorem}[\textbf{Adaptation of \citealt{iusem1990convergence}}] 
\label{thm:cd_dyk_par_lasso_rate}
Assume the columns of $X \in \R^{n \times p}$ are in general position, 
and $\lambda>0$.
Then parallel-Dykstra-CD \eqref{eq:cd_dyk_par} for the lasso
\eqref{eq:lasso} has an asymptotically linear convergence rate, in
that for large enough $k$, using the notation of Theorem
\ref{thm:cd_lasso_rate_ip},
\begin{equation}
\label{eq:cd_dyk_par_lasso_rate}
\frac{\|w^{(k+1)} - \hat{w}\|_\Sigma}{\|w^{(k)}-\hat{w}\|_\Sigma} \leq  
\bigg(\frac{2a/\gamma_{\min}}{(2a/\gamma_{\min} + 
  \lambda_{\min}(X_A^T X_A)/ \max_{i \in A} \|X_i\|_2^2}\bigg)^{1/2},
\end{equation}
where $\gamma_{\min}=\min_{i=1,\ldots,p} \gamma_i \leq 1/p$ is the 
minimum of the weights.
\end{theorem}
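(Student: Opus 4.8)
The plan is to mirror the proof of Theorem \ref{thm:cd_lasso_rate_ip}, replacing the sequential analysis of \citet{iusem1990convergence} by its parallel (simultaneous-projection) counterpart, and to track how the weights $\gamma_i$ enter the resulting error constant.

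First I would set up the translation between the parallel-Dykstra iterates and the coordinate descent iterates $w^{(k)}$. Recall from the derivation in Section \ref{sec:cd_dyk_par} that $z_i^{(k)} = X_i\tilde{w}_i^{(k)}$, that $w_i^{(k)} = \gamma_i\tilde{w}_i^{(k)}$, and that $u_0^{(k)} = y - \sum_{i=1}^d\gamma_i z_i^{(k-1)}$. Combining these gives $u_0^{(k)} = y - Xw^{(k-1)}$, while duality yields $\hat{u} = y - X\hat{w}$. Hence $\|u_0^{(k+1)}-\hat{u}\|_2 = \|w^{(k)}-\hat{w}\|_\Sigma$, the exact analog of \eqref{eq:dyk_cd_relationship}. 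This lets me convert any asymptotic rate on the averaged Dykstra iterate $u_0^{(k)}$ (the point projected onto the subspace $C_0$) into the stated rate on $\|w^{(k)}-\hat{w}\|_\Sigma$, up to an immaterial index shift absorbed by the ``for large enough $k$'' qualifier.

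Next I would invoke the parallel version of the convergence result of \citet{iusem1990convergence} for projecting onto a polyhedron. As in the proof of Theorem \ref{thm:cd_lasso_rate_ip}, I would first rewrite each slab $C_i = \{v : |X_i^Tv|\le\lambda\}$ as an intersection of two halfspaces so that the halfspace-based analysis applies, and I would use the (finite) inactive-set identification---now for the parallel iterates---to conclude that $w_i^{(k)} = 0$ for all $i\notin A$ once $k$ is large. The asymptotic rate takes the form $(1/(1+\sigma))^{1/2}$, where the parallel error constant is $\sigma = \gamma_{\min}\mu^2/2$: the weights enter through a weighted Cauchy--Schwarz step, $\sum_{i\in A}\|\cdot\|_2^2 \le \gamma_{\min}^{-1}\sum_{i\in A}\gamma_i\|\cdot\|_2^2$, in which only the $a$ active blocks contribute, and the factor $2$ comes from the two faces of each slab. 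Here $\mu$ is the same purely geometric quantity as in the sequential case, defined over the active hyperplanes $\{v : X_i^Tv = \rho_i\lambda\}$, $i\in A$, with $\rho = \sign(X_A^T\hat{u})$, and is in particular independent of the weights.

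Since $\mu$ is unchanged, I would lower bound it exactly as before: applying Lemma \ref{lem:mu_lower_bound} with the $s=a$ active hyperplanes, whose normals are the columns $X_i$, $i\in A$, gives $\mu^2 \ge \lambda_{\min}(X_A^TX_A)/(a\max_{i\in A}\|X_i\|_2^2)$. Substituting into $\sigma = \gamma_{\min}\mu^2/2$ yields $\sigma \ge \gamma_{\min}\lambda_{\min}(X_A^TX_A)/(2a\max_{i\in A}\|X_i\|_2^2)$, and rearranging $(1/(1+\sigma))^{1/2}$---multiplying numerator and denominator by $2a/\gamma_{\min}$---produces exactly \eqref{eq:cd_dyk_par_lasso_rate} after applying the iterate relationship from the first step. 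The main obstacle I anticipate is pinning down the precise form of the parallel error constant $\sigma = \gamma_{\min}\mu^2/2$: this requires carefully re-deriving the estimate of \citet{iusem1990convergence} for the simultaneous-projection scheme with general weights $\gamma_i$, verifying that the weighted averaging contributes the factor $\gamma_{\min}^{-1}$ restricted to the active blocks, and confirming that the active-set identification argument (their Lemma 1) carries over to the parallel iterations. By contrast, the lower bound on $\mu$ and the translation to $\|\cdot\|_\Sigma$ are immediate adaptations of the sequential proof.
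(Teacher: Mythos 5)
Your proposal is correct and follows essentially the same route as the paper: translate the iterate identity $u_0^{(k+1)} = y - Xw^{(k)}$ into $\|u_0^{(k+1)}-\hat{u}\|_2 = \|w^{(k)}-\hat{w}\|_\Sigma$, invoke the asymptotic rate for the weighted simultaneous (parallel) scheme of \citet{iusem1990convergence}, lower-bound $\mu^2$ via Lemma \ref{lem:mu_lower_bound}, and rearrange. The one step you flag as the main obstacle---re-deriving the parallel error constant with general weights---is unnecessary: Theorem 2 of \citet{iusem1990convergence} already supplies $\sigma = \mu^2/((1/\gamma_{\min}-1)(2-\gamma_{\min}))$ for this scheme, which the paper simply lower-bounds by exactly your claimed constant $\gamma_{\min}\mu^2/2$.
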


We note that the parallel bound \eqref{eq:cd_dyk_par_lasso_rate} is 
worse than the serial bound \eqref{eq:cd_lasso_rate_ip}, because the
former relies on a quantity \smash{$2a/\gamma_{\min} \geq 2pa$} where
the latter relies on $a^2$. We conjecture that the bound
\eqref{eq:cd_dyk_par_lasso_rate} can be sharpened, by modifying the
parallel  algorithm so that we renormalize the weights in each cycle
after excluding the weights from zero coefficients.   

\begin{proof}
The proof is similar to that for Theorem
\ref{thm:cd_lasso_rate_ip}, given in Section
\ref{sec:cd_lasso_rate_ip}.
By Theorem 2 in \citet{iusem1990convergence}, for large enough $k$,
the iterates of \eqref{eq:dyk_par} satisfy
$$
\frac{\|u_0^{(k+1)} - \hat{u}\|_2}{\|u_0^{(k)} - \hat{u}\|_2} \leq  
\bigg(\frac{1}{1+\sigma}\bigg)^{1/2},
$$
where 
\smash{$\sigma = \mu^2/((1/\gamma_{\min}-1)(2-\gamma_{\min})) \geq  
\mu^2\gamma_{\min}/2$},
and $\mu>0$ is exactly as in Section 
\ref{sec:cd_lasso_rate_ip}.  The derivation details for
\eqref{eq:dyk_par}, \eqref{eq:cd_dyk_par} in the last section revealed
that the iterates from these two algorithms satisfy
$$
z_i^{(k)} = X_i w_i^{(k)} / \gamma_i, \quad i=1,\ldots,d  
\quad \text{and} \quad 
u_0^{(k+1)} = y - X w^{(k)},
\quad \text{for $k=1,2,3,\ldots$},
$$
hence
$$
\|u_0^{(k+1)}-\hat{u}\|_2 = \|Xw^{(k)}-X\hat{w}\|_2 =
\|w^{(k)}-\hat{w}\|_\Sigma, \quad \text{for $k=1,2,3,\ldots$}, 
$$
which gives the result.
\end{proof}

\section{Derivation details for \eqref{eq:cd_admm_par} and proof of
  Theorem \ref{thm:cd_admm_par}}
\label{sec:cd_admm_par}

Recall that \eqref{eq:bap_prod} can be rewritten as in
\eqref{eq:bap_prod2}.  The latter is a 2-set best approximation
problem, and so an ADMM algorithm takes the form of \eqref{eq:admm}
in Section \ref{sec:connections}.  Applying this to
\eqref{eq:bap_prod2}, and transforming the iterates back to their
original scale, we arrive at the following ADMM algorithm.  We
initialize 
\smash{$u_1^{(0)}=\cdots=u_d^{(0)}=0$},
\smash{$z_1^{(0)}=\cdots=z_d^{(0)}=0$},
and repeat for $k=1,2,3,\ldots$:
\begin{equation}
\label{eq:admm_par}
\begin{aligned}
&u_0^{(k)} = \frac{y}{1+\rho} + \frac{\rho}{1+\rho}
\sum_{i=1}^d \gamma_i(u_i^{(k-1)}-z_i^{(k-1)}), \\
&\begin{rcases*}
u_i^{(k)} = P_{C_i} (u_0^{(k)} + z_i^{(k-1)}), & \\
z_i^{(k)} = z_i^{(k-1)} + u_0^{(k)} - u_i^{(k)}, &
\end{rcases*}
\quad \text{for $i=1,\ldots,d$}.
\end{aligned}
\end{equation}
Basically the same arguments as those given in Section
\ref{sec:cd_dyk_par}, where we argued that \eqref{eq:dyk_par} is  
equivalent to \eqref{eq:cd_dyk_par}, now show that 
 \eqref{eq:admm_par} is equivalent to \eqref{eq:cd_admm_par}. 
Note that in the latter algorithm, we have slightly rewritten the
algorithm parameters, by using the notation
\smash{$\rho_i=\rho\gamma_i$}, $i=1,\ldots,d$.  That the
parallel-ADMM-CD iterations \eqref{eq:cd_admm_par} are equivalent to
the parallel-Dykstra-CD iterations \eqref{eq:cd_dyk_par} follows
from the equivalence of the 2-set Dykstra iterations 
\eqref{eq:dyk_par} and ADMM iterations \eqref{eq:admm_par}, which, 
recalling the discussion in Section \ref{sec:connections}, follows
from the fact that \smash{$\tilde{C}_0$} is a linear subspace and
\smash{$\tilde{y} \in \tilde{C}_0$} (i.e., $C_0$ is a linear subspace
and $(y,\ldots,y) \in C_0$).

The proof of Theorem \ref{thm:cd_admm_par} essentially just uses the
duality established in the proof of Theorem \ref{thm:cd_dyk_par} in
Section \ref{sec:cd_dyk_par}, and invokes standard theory for ADMM
from \citet{gabay1983applications,eckstein1992douglas,
boyd2011distributed}.  As shown previously, the dual of
\eqref{eq:bap_prod2} is \eqref{eq:reg_prod}, and by, e.g., the result
in Section 3.2 of \citet{boyd2011distributed}, which applies because   
$$
\|\tilde{y}-\tilde{u}\|_2^2, 1_{\tilde{C}_0}(\tilde{u}),
1_{\tilde{C}_1 \times \cdots \times \tilde{C}_d}(\tilde{u})
$$
are closed, convex functions of \smash{$\tilde{u}$}, the scaled dual
iterates in the ADMM algorithm for 
\eqref{eq:bap_prod2} converge to a solution in \eqref{eq:reg_prod}, or  
equivalently, the iterates \smash{$\rho_iz_i^{(k)}=X_iw_i^{(k)}$},
$i=1,\ldots,d$, $k=1,2,3,\ldots$ in \eqref{eq:admm_par} converge to
the optimal fitted values \smash{$X_i\hat{w}_i$}, $i=1,\ldots,d$ in  
\eqref{eq:reg}, or equivalently, the sequence \smash{$w^{(k)}$},
$k=1,2,3,\ldots$ in \eqref{eq:cd_admm_par} converges to a solution 
in \eqref{eq:reg}. 

\section{Numerical experiments comparing \eqref{eq:cd},
  \eqref{eq:cd_dyk_par}, \eqref{eq:cd_admm_par}} 

Figure \ref{fig:experiments} shows results from numerical
simulations comparing serial parallel coordinate descent
\eqref{eq:cd} to parallel-Dykstra-CD \eqref{eq:cd_dyk_par} and
parallel-ADMM-CD \eqref{eq:cd_admm_par} for the lasso problem. 
Our simulation setup was simple, and the goal was to investigate
the basic behavior of the new parallel proposals, and not to
investigate performance at large-scale nor compare to state-of-the
art implementations of coordinate descent for the lasso (ours was a 
standard implementation with no speedup tricks---like warm starts,
screening rules, or active set optimization---employed).

\begin{figure}[htbp]
\centering
\includegraphics[width=0.485\textwidth]{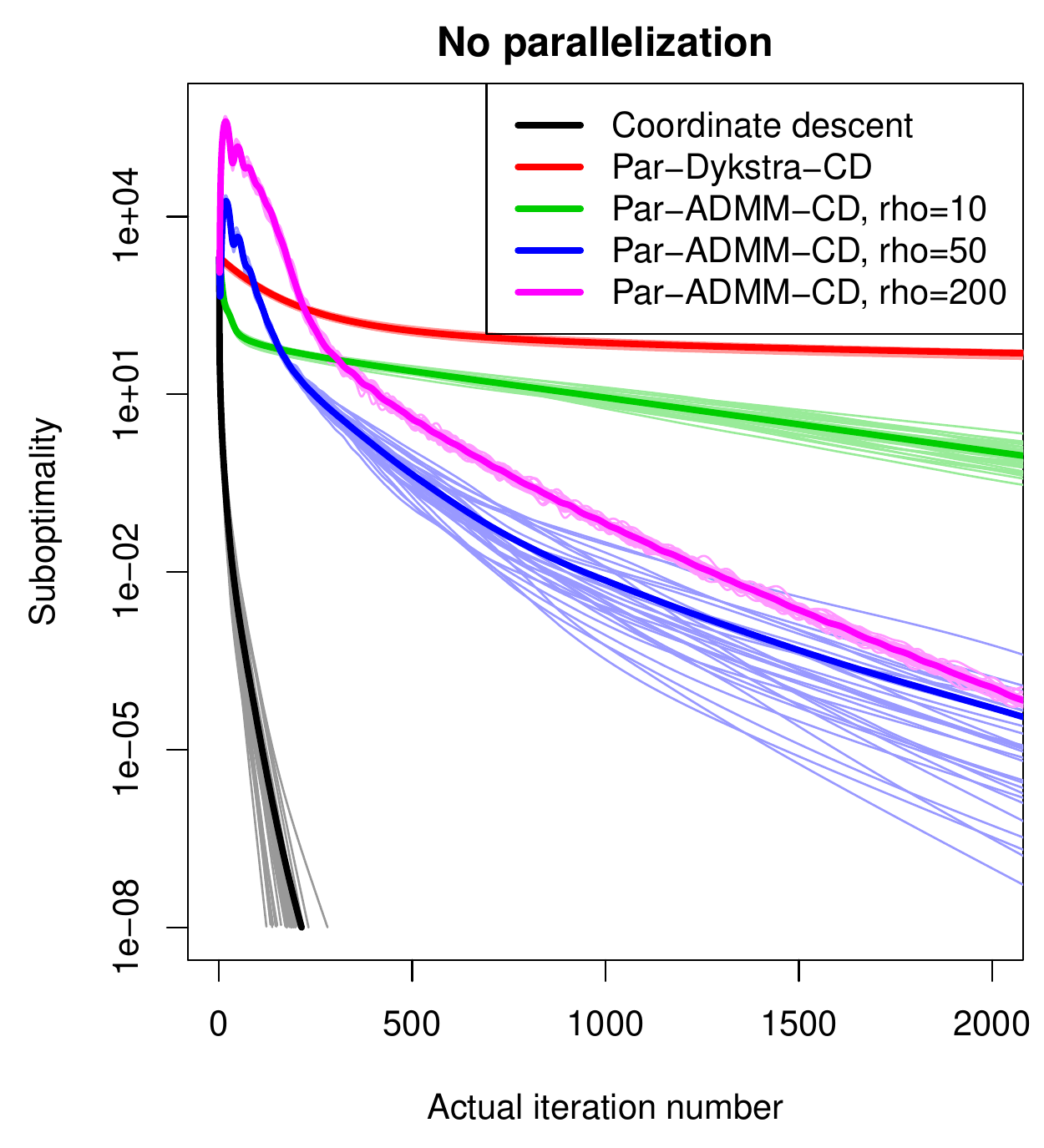}
\includegraphics[width=0.485\textwidth]{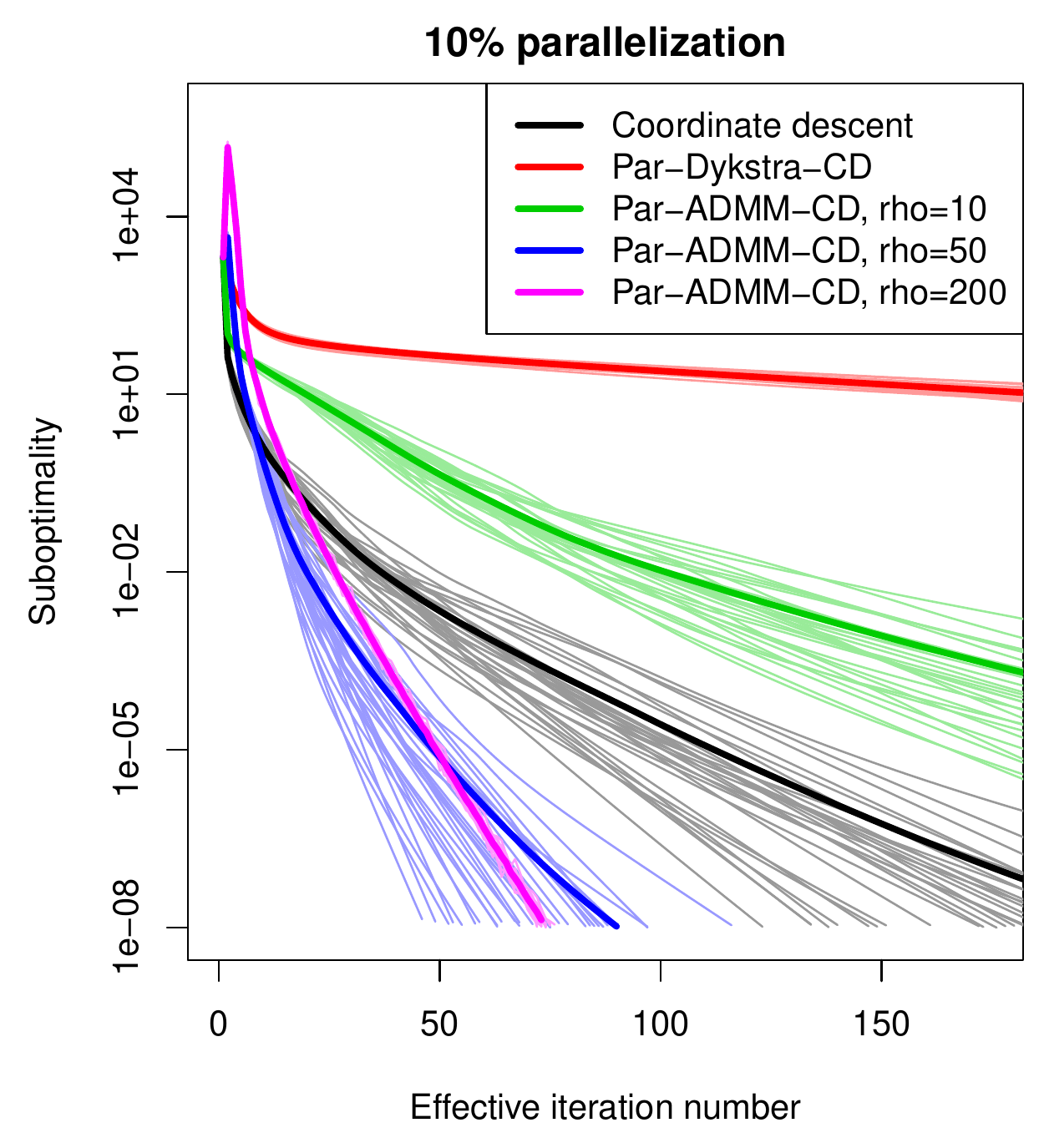}
\label{fig:experiments}
\caption{\it Suboptimality curves for serial coordinate descent,
  parallel-Dykstra-CD, and parallel-ADMM-CD, each run over the same 30 
lasso problems with $n=100$ and $p=500$. More details are given in
the text.}
\end{figure}

We considered a regression setting with $n=100$ observations and
$p=500$ predictors.  
Denoting by $x_i \in \R^p$ denotes the $i$th row of $X^{n \times p}$, 
the data was drawn according to the Gaussian linear model   
$$
x_i \sim N(0,I_{p\times p}) \quad \text{and}\quad  
y_i \sim N(x_i^T \beta_0, 1)
\quad \text{i.i.d., for $i=1,\ldots,n$},
$$
where $\beta_0 \in \R^p$ had its first 20 components equal to 1,
and the rest 0.  We computed solutions to the lasso problem
\eqref{eq:lasso} at $\lambda=5$, over 30 draws of data $X,y$ from the 
above model. At this value of $\lambda$, the lasso solution
\smash{$\hat{w}$} had an average of 151.4 nonzero components over the
30 trials. (Larger values of $\lambda$ resulted in faster convergence
for all algorithms and we found the comparisons more interesting at
this smaller, more challenging value of $\lambda$.)  

The figure shows the suboptimality, i.e., achieved criterion value 
minus optimal criterion value, as a function of iteration number, for: 
\begin{itemize}
\item the usual serial coordinate descent iterations \eqref{eq:cd}, in
  black; 
\item the parallel-Dykstra-CD iterations \eqref{eq:cd_dyk_par} with  
\smash{$\gamma_1=\cdots=\gamma_p=1/p$}, in red; 
\item the parallel-ADMM-CD iterations \eqref{eq:cd_admm_par} with 
\smash{$\rho_1=\cdots=\rho_p=1/p$} and 3 different
settings of \smash{$\rho=\sum_{i=1}^p \rho_i$}, namely
$\rho=10,50,200$, in green, blue, and purple respectively.  
\end{itemize}
(Recall that for $\rho=1$, parallel-ADMM-CD and parallel-Dykstra-CD
are equivalent.)  Thin colored lines in the figures denote the
suboptimality curves for individual lasso problem instances, and
thick colored lines represent the average suboptimality curves over
the 30 total instances.  In all instances, suboptimality is measured
with respect to the criterion value achieved by the least angle
regression algorithm \citep{efron2004least}, which is a direct
algorithm for the lasso and should return the exact solution up to
computer precision. 

The left panel of the figure displays the suboptimality curves as a
function of raw iteration number, which for the parallel methods
\eqref{eq:cd_dyk_par}, \eqref{eq:cd_admm_par} would correspond to
running these algorithms in a naive serial mode.  In the right panel,
iterations of the parallel methods are counted under a hypothetical 
``10\% efficient'' parallel implementation, where $0.1p$ updates of
the $p$ total updates in \eqref{eq:cd_dyk_par},
\eqref{eq:cd_admm_par} are able to be computed at the cost of 1 serial
update in \eqref{eq:cd}.  (A ``100\% efficient''
implementation would mean that all $p$ updates in
\eqref{eq:cd_dyk_par},  \eqref{eq:cd_admm_par} could be performed at
the cost of 1 serial update in \eqref{eq:cd}, which, depending on the 
situation, may certainly be unrealistic, due to a lack of 
available parallel processors, synchronization issues, etc.)  While
the parallel methods display much worse convergence based on raw
iteration number, they do offer clear benefits in the 10\% 
parallelized scenario.  Also, it seems that a larger value of $\rho$
generally leads to faster convergence, though the benefits of taking
$\rho=200$ over $\rho=50$ are not quite as clear (and for values of
$\rho$ much larger than 200, performance degrades).

\section{Proof of Theorem \ref{thm:dyk_cd_equiv_gen}}
\label{sec:dyk_cd_gen} 

First, we establish the following generalization of Lemma
\ref{lem:dyk_cd_update}. 

\begin{lemma}
\label{lem:dyk_cd_update_gen}
Let $f$ be a closed, strictly convex, differentiable function.
Assume $X_i \in \R^{n\times p_i}$ has full
column rank, and \smash{$h_i(v) =\max_{d \in D_i} \langle d,v 
  \rangle$} for a closed, convex set $D_i \subseteq \R^{p_i}$
containing 0. Then for \smash{$C_i=(X_i^T)^{-1}(D_i) \subseteq \R^n$}, 
and any $a \in \R^n$,  
$$
\hat{w}_i = \argmin_{w_i \in \R^{p_i}} \;
f(a+X_iw_i) + h_i(w_i) \iff
X_i \hat{w}_i = 
\Big( \nabla g - \nabla g \circ P_{C_i}^g \Big)\big(\nabla
g^*(-a)\big), 
$$
where $g(v)=f^*(-v)$.
\end{lemma}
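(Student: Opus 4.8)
The plan is to characterize both sides by first-order optimality conditions and show the two systems coincide; I drop the subscript $i$ and write $X$, $D$, $C=(X^T)^{-1}(D)$, with $h$ the support function of $D$ (so $h^*=1_D$ and $d\in\partial h(w)\iff d\in D,\ w\in N_D(d)$). Since $f$ is closed, strictly convex and differentiable (hence of Legendre type on the relevant domains), standard conjugacy gives that $f^*$ is differentiable with $\nabla f^*=(\nabla f)^{-1}$. From $g=f^*(-\cdot)$ I would record $g^*=f(-\cdot)$, $\nabla g(v)=-\nabla f^*(-v)$, $\nabla g^*(v)=-\nabla f(-v)$, and $\nabla g\circ\nabla g^*=\Id$. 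Setting $x_0=\nabla g^*(-a)=-\nabla f(a)$, these yield the two computations that drive everything: $\nabla g(x_0)=-a$, and for any $w$, writing $u=-\nabla f(a+Xw)$, one gets $\nabla g(u)=-\nabla f^*(\nabla f(a+Xw))=-(a+Xw)$, so that $\nabla g(x_0)-\nabla g(u)=Xw$. This is the algebra that converts the right-hand side of the claim into $X\hat w$, once the Bregman projection is shown to land at $u$.

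Next I would write the two optimality systems. As $f$ is differentiable, the sum rule gives that $\hat w$ minimizes the primal iff $0\in X^T\nabla f(a+X\hat w)+\partial h(\hat w)$; with $u=-\nabla f(a+X\hat w)$ this is equivalent to (a) $X^T u\in D$ (that is, $u\in C$) and (b) $\hat w\in N_D(X^T u)$, i.e. $\langle X^T u,\hat w\rangle=h(\hat w)$. On the other side, the Bregman projection $p=P^g_{C}(x_0)=\argmin_{c\in C} g(c)-\langle\nabla g(x_0),c\rangle$ is characterized by $p\in C$ and $\nabla g(x_0)-\nabla g(p)\in N_C(p)$, i.e. $-a-\nabla g(p)\in N_C(p)$. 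The crux is to match (b) with the projection condition. Since $X$ has full column rank, $X^T$ is surjective, so $X^T(C)=D$; hence $X\hat w\in N_C(u)$, which reads $\langle X\hat w,c-u\rangle\le 0$ for all $c\in C$, unwinds via $\langle \hat w, X^Tc-X^Tu\rangle\le 0$ to $\langle \hat w,d-X^T u\rangle\le 0$ for all $d\in D$, i.e. exactly $\hat w\in N_D(X^T u)$. Thus, given (a), condition (b) is the same as $X\hat w\in N_C(u)$.

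With this the forward direction is immediate: from a primal minimizer $\hat w$ and $u=-\nabla f(a+X\hat w)$, (a),(b) give $u\in C$ and $X\hat w\in N_C(u)$, while the first paragraph gives $X\hat w=-a-\nabla g(u)=\nabla g(x_0)-\nabla g(u)$; so $u$ satisfies the projection system, whence $u=P^g_{C}(x_0)$ and $(\nabla g-\nabla g\circ P^g_{C})(x_0)=\nabla g(x_0)-\nabla g(u)=X\hat w$. For the converse I would run the equivalences backward, but first establish the containment $N_C(c)\subseteq\col(X)$ for every $c\in C$ (the fact, invoked elsewhere in the paper, that the image of $\nabla g-\nabla g\circ P^g_{C}$ lies in $\col(X)$): since $X^T(c+\eta_0)=X^Tc$ for $\eta_0\in\nul(X^T)$, the set $C$ is invariant under translation by $\nul(X^T)$, so any $\eta\in N_C(c)$ obeys $t\langle\eta,\eta_0\rangle\le 0$ for all $t\in\R$, forcing $\eta\perp\nul(X^T)$, i.e. $\eta\in\col(X)$. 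Hence if $X\hat w=(\nabla g-\nabla g\circ P^g_{C})(x_0)$, then with $p=P^g_{C}(x_0)$ the projection conditions give $X\hat w=\nabla g(x_0)-\nabla g(p)\in N_C(p)\subseteq\col(X)$; inverting the algebra ($\nabla g(p)=-a-X\hat w$ and $\nabla g^*=(\nabla g)^{-1}$) shows $p=-\nabla f(a+X\hat w)=u$, so $u\in C$ and $X\hat w\in N_C(u)$, which is exactly (a) and (b), making $\hat w$ the primal minimizer. Full column rank of $X$ ensures $\hat w$ is uniquely determined by $X\hat w$ throughout.

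The main obstacle I anticipate is not any single step but the sign bookkeeping in the conjugate calculus: keeping $g=f^*(-\cdot)$ straight through $\nabla g$, $\nabla g^*$, and $x_0$, and justifying $\nabla f^*=(\nabla f)^{-1}$ together with the differentiability of $f^*$ from the hypothesis ``closed, strictly convex, differentiable,'' so that $x_0$, $\nabla g(x_0)$, and the Bregman projection are all well defined. The genuinely structural convex-analytic ingredients — the sum rule for the primal subdifferential, the identity $X^T(C)=D$, and the containment $N_C(c)\subseteq\col(X)$ — are each short once the setup is fixed.
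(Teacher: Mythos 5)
Your proof is correct and takes essentially the same approach as the paper's: both sides are characterized by their first-order optimality conditions (conjugate subgradient inversion for the regularized subproblem, the optimality condition defining the Bregman projection) and then matched under the conjugate algebra $x_0=\nabla g^*(-a)$, $X_i\hat{w}_i=\nabla g(x_0)-\nabla g(u)$ with $u=-\nabla f(a+X_i\hat{w}_i)$, your direct normal-cone transfer via $X_i^T(C_i)=D_i$ playing exactly the role of the paper's chain-rule identity $\partial\big(1_{D_i}\circ X_i^T\big)(v)=X_i\,\partial 1_{D_i}(X_i^T v)$. The only inessential difference is your preliminary containment $N_{C_i}(c)\subseteq\col(X_i)$, which is harmless but not actually needed in the converse direction, since there the residual is assumed equal to $X_i\hat{w}_i$ and hence already lies in $\col(X_i)$.
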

\begin{proof}
We begin by analyzing the optimality condition that characterizes 
the Bregman projection 
\smash{$\hat{u}_i =P_{C_i}^g(x)
=\argmin_{c \in C_i} g(c) - g(x) - \langle \nabla g(x), c-x
\rangle$}, namely
$$
\nabla g(x) - \nabla g(\hat{u}_i) \in \partial 1_{C_i}(\hat{u}_i).
$$
Defining \smash{$\hat{z}_i=\nabla g(x) - \nabla g(\hat{u}_i)
=(\nabla g - \nabla g \circ P_{C_i}^g)(x)$},
this becomes
$$
\hat{z}_i \in \partial 1_{C_i}\Big( \nabla g^*\big(\nabla
g(x)-\hat{z}_i\big)\Big), 
$$
where we have used the fact that \smash{$\nabla g^*= (\nabla
  g)^{-1}$}, allowing us to rewrite the relationship between
\smash{$\hat{u}_i,\hat{z}_i$} as 
\smash{$\hat{u}_i=\nabla g^*(\nabla g(x)-\hat{z}_i)$}.  And lastly,
substituting $g(v)=f^*(-v)$ (and $g^*(v)=f(-v)$) the optimality
condition reads 
%% Tricky
\begin{equation}
\label{eq:breg_resid}
\hat{z}_i \in \partial 1_{C_i}\Big( -\nabla f\big(\nabla
f^*(-x)+\hat{z}_i\big)\Big).
\end{equation}

Now we investigate the claim in the lemma. By subgradient optimality, 
\begin{align*}
\hat{w}_i = \argmin_{w_i \in \R^{p_i}} \;
f(a+X_iw_i) + h_i(w_i) &\iff
- X_i^T (\nabla f)(a+X_i\hat{w}_i) \in \partial h_i(\hat{w}_i) \\
&\iff
\hat{w}_i \in \partial h^*_i\Big( -X_i^T (\nabla f)(a+X_i\hat{w}_i) 
  \Big) \\
&\iff 
X_i \hat{w}_i \in X_i \partial h^*_i\Big( -X_i^T (\nabla f)
(a+X_i\hat{w}_i) \Big).
\end{align*}
The second line follows from the fact that, for a closed, convex 
function $g$, subgradients of $g$ and of $g^*$ are related via  
\smash{$x \in \partial f(y) \Longleftrightarrow y \in \partial
  g^*(x)$}; the third line follows from the fact that $X_i$ 
 has full column rank.  Note that
\smash{$h^*_i=1_{D_i}$}, the indicator function of $D_i$, and 
denote \smash{$h_{C_i}(v)=\sup_{c \in C_i} \langle c,v \rangle$}.
Then following from the last display, by the chain rule,
$$
\hat{w}_i = \argmin_{w_i \in \R^{p_i}} \;
f(a+X_iw_i) + h_i(w_i) \iff
X_i \hat{w}_i \in \partial h^*_{C_i}\Big(-\nabla f (a+X_i\hat{w}_i) 
\Big),  
$$
because \smash{$h^*_{C_i}=1_{C_i}=1_{D_i} \circ X_i^T$}. Applying the
previously established fact \eqref{eq:breg_resid} on Bregman
projections gives
$$
\hat{w}_i = \argmin_{w_i \in \R^{p_i}} \;
f(a+X_iw_i) + h_i(w_i) \iff
X_i \hat{w}_i = \Big(\nabla g - \nabla g \circ P_{C_i}^g\Big)(x) 
$$
for $a=\nabla f^*(-x)=-\nabla g(x)$, i.e., for $x=\nabla
g^*(-a)$.  This completes the proof of the lemma.
\end{proof} 

We are ready for the proof of the theorem.
We start with the claim about duality between \eqref{eq:reg_gen},   
\eqref{eq:bap_gen}. Standard arguments in convex analysis show that
the Lagrange dual of \eqref{eq:reg_gen} is
$$
\max_{u \in \R^n} \; -f^*(-u) - \sum_{i=1}^d h_i^*(X_i^T u),
$$
where $f^*$ is the conjugate of $f$ and \smash{$h_i^*=1_{D_i}$} 
the conjugate of $h_i$, $i=1,\ldots,d$, with the relationship
between the primal \smash{$\hat{w}$} and dual \smash{$\hat{u}$}
solutions being
\smash{$\hat{u} = -\nabla f( X\hat{w})$}.  Written in equivalent
form, the dual problem is
$$
\min_{u \in \R^n} \; f^*(-u) \quad \st \quad 
u \in C_1 \cap \cdots \cap C_d. 
$$
Recalling $g(v)=f^*(-v)$, and $b=-\nabla f(0)$, we have by
construction  
$$
D_g(u,b) = g(u) - g(b) - \langle \nabla g(b), u-b \rangle = 
f^*(-u) - f^*(\nabla f(0)),
$$
where we have used the fact that \smash{$\nabla g(b)=-\nabla
  f^*(\nabla f(0))=0$}, as \smash{$\nabla f^*=(\nabla
  f)^{-1}$}. Therefore the above dual problem, in the second to last
display, is equivalent to \eqref{eq:bap_gen}, establishing the claim.

Now we proceed to the claim about the equivalence between Dykstra's
algorithm \eqref{eq:dyk_gen} and coordinate descent
\eqref{eq:cd_gen}.  We note that a simple inductive argument shows
that the Dykstra iterates satisfy, for all $k=1,2,3,\ldots$,
$$
\nabla g(u_i^{(k)}) = -\sum_{j \leq i} z_j^{(k)}-\sum_{j > i}
z_j^{(k-1)}, \quad i=1,\ldots,d.
$$
We also note that, for $i=1,\ldots,d$, the image of \smash{$\nabla
  g - \nabla g \circ P_{C_i}^g$} is contained in the column span
of $X_i$.  To see this, write \smash{$\hat{u}_i=P_{C_i}^g(a)$}, and
recall the optimality condition for the Bregman projection,
$$
\langle \nabla g(\hat{u}_i) - \nabla g(a), c - \hat{u}_i \rangle \geq
0, \quad c \in C_i.
$$
If \smash{$\langle \nabla g(\hat{u}_i) - \nabla g(a), \delta \rangle 
  \not=0$} for some \smash{$\delta \in \nul(X_i^T)$},
supposing without a loss of generality that this inner product is  
negative, then the above optimality condition breaks for
\smash{$c=\hat{u}_i+\delta \in C_i$}.  Thus we have
shown by contradiction that \smash{$\nabla g(a) - \nabla g(\hat{u}_i)
  \perp \nul(X_i^T)$}, i.e., \smash{$\nabla g(a) - \nabla
  g(\hat{u}_i) \in \col(X_i)$}, the desired fact. 

Putting together the last two facts, we can write the  
Dykstra iterates, for $k=1,2,3,\ldots$, as
$$
z^{(k)}_i = X_i\tilde{w}_i^{(k)} \quad\text{and}\quad
\nabla g(u_i^{(k)}) = -\sum_{j \leq i}
X_j\tilde{w}_j^{(k)}-\sum_{j > i} X_j \tilde{w}_j^{(k-1)}, \quad 
\text{for $i=1,\ldots,d$},
$$
for some sequence \smash{$\tilde{w}_i^{(k)}$}, $i=1,\ldots,d$, and
$k=1,2,3,\ldots$.  In this parametrization, the $z$-updates in the
Dykstra iterations \eqref{eq:dyk_gen} are thus
$$
X_i \tilde{w}_i^{(k)} = \Big( \nabla g - \nabla g \circ P_{C_i}^g
\Big)\Bigg(\nabla g^*\bigg(-\sum_{j < i}
X_j\tilde{w}_j^{(k)}-\sum_{j > i} X_j \tilde{w}_j^{(k-1)}
\bigg)\Bigg), \quad i=1,\ldots,d,
$$
where we have used the fact that \smash{$\nabla g^*= (\nabla
  g)^{-1}$}. Invoking Lemma \ref{lem:dyk_cd_update_gen}, we know that
the above is equivalent to 
$$
\tilde{w}^{(k)}_i = \argmin_{\tilde{w}_i \in \R^{p_i}} \; 
f\bigg(\sum_{j < i} X_j\tilde{w}_j^{(k)} + \sum_{j > i} X_j 
\tilde{w}_j^{(k-1)} +X_i\tilde{w}_i\bigg) + h_i(\tilde{w}_i), 
\quad i=1,\ldots,d,
$$
which are exactly the coordinate descent iterations \eqref{eq:cd}.
It is easy to check that the initial conditions for the two
algorithms also match, and hence
\smash{$\tilde{w}_i^{(k)}=w_i^{(k)}$}, for all $i=1,\ldots,d$
and $k=1,2,3,\ldots$, completing the proof.  

\section{Parallel coordinate descent algorithms for
  \eqref{eq:reg_gen}} 

We first consider parallelization of projection algorithms for the best
Bregman-approximation problem \eqref{eq:bap_gen}.  
As in the Euclidean projection case, to derive parallel algorithms for 
\eqref{eq:bap_gen}, we will turn to a product space reparametrization,
namely, 
\begin{equation}
\label{eq:bap_gen_prod}
\min_{u \in \R^{nd}} \; D_{g^d}(u,\tilde{b})  
\quad \st \quad u \in C_0 \times (C_1 \times \cdots \times C_d), 
\end{equation}
where \smash{$C_0=\{ (u_1,\ldots,u_d) \in \R^{nd} :
  u_1=\cdots=u_d\}$}, 
\smash{$\tilde{b}=(b,\ldots,b) \in \R^{nd}$}, and we define the
function $g^d : \R^{nd} \to \R$ by
\smash{$g^d(u_1,\ldots,u_d)=\sum_{i=1}^d  
  g(u_i)$}.  For simplicity we have not introduced arbitrary
probability weights into \eqref{eq:bap_gen_prod}, and thus the 
resulting algorithms will not feature arbitrary weights for the
components, but this is a straightforward generalization and will
follow with only a bit more complicated notation.  

Dykstra's algorithm for the 2-set problem \eqref{eq:bap_gen_prod}
sets \smash{$u_1^{(0)}=\cdots=u_d^{(0)}=b$}, 
\smash{$r_1^{(0)}=\cdots=r_d^{(0)}=0$}, and
\smash{$z_1^{(0)}=\cdots=z_d^{(0)}=0$}, then repeats for
$k=1,2,3,\ldots$: 
\begin{equation}
\begin{aligned}
\label{eq:dyk_gen_par}
&u_0^{(k)} = \argmin_{u_0 \in \R^n} \; 
g(u_0) - \frac{1}{d} \sum_{i=1}^d \Big\langle 
\nabla g(u_i^{(k-1)}) + r_i^{(k-1)}, u_0 \Big \rangle, \\   
&\begin{rcases*}
r_i^{(k)} = \nabla g(u_i^{(k-1)}) + r_i^{(k-1)} - \nabla g(u_0^{(k)}),
& \\ 
u_i^{(k)} = (P^g_{C_i} \circ \nabla g^*)\Big(\nabla g (u_0^{(k)})
+ z_i^{(k-1)}\Big), & \\ 
z_i^{(k)} = \nabla g(u_0^{(k)}) + z_i^{(k-1)} - \nabla g(u_i^{(k)}), &  
\end{rcases*}
\quad \text{for $i=1,\ldots,d$}.
\end{aligned}
\end{equation}
Now we will rewrite the above iterations, under $b=-\nabla f(0)$,
where $g(v) =f^*(-v)$. 
The $u_0$-update in \eqref{eq:dyk_gen_par} is defined by the Bregman
projection of  
$$
\bigg( \nabla g^*\Big(\nabla g(u_1^{(k-1)})+r_1^{(k-1)}\Big),
\;\ldots,\; 
\nabla g^*\Big( \nabla g(u_d^{(k-1)})+r_d^{(k-1)} \Big) \bigg) 
\in \R^{nd} 
$$
onto the set $C_0$, with respect to the function $g^d$. By first-order
optimality, this update can be rewritten as
$$
\nabla g(u_0^{(k)}) = \frac{1}{d} \sum_{i=1}^d \Big(\nabla
g(u_i^{(k-1)}) + r_i^{(k-1)}\Big).
$$
Plugging in the form of the $r$-updates, and using a simple induction,
the above can be rewritten as   
\begin{align*}
\nabla g(u_0^{(k)}) &= \frac{1}{d}\sum_{i=1}^d \sum_{\ell=1}^{k-1}
\Big(\nabla g(u_i^{(\ell)}) - \nabla g(u_0^{(\ell)}) \Big) \\
&= \nabla g(b) + \frac{1}{d}\sum_{i=1}^d \sum_{\ell=1}^{k-1} 
(z_i^{(\ell-1)} - z_i^{(\ell)}) \\
&= -\frac{1}{d} \sum_{i=1}^d z_i^{(k)},  
\end{align*}
where in the second line we used the relationship given by
$z$-updates and recalled the initializations
\smash{$u_1^{(0)} = \cdots = u_d^{(0)} = b$}, and in the third line we
used \smash{$\nabla g(b)=\nabla g(-\nabla f(0))=\nabla g(\nabla
  g^*(-0))=0$}. 
Similar arguments as those given in the proof of
Theorem \ref{thm:dyk_cd_equiv_gen} in Section \ref{sec:dyk_cd_gen}
show that the $u$-updates and $z$-updates in \eqref{eq:dyk_gen_par}
can be themselves condensed 
to 
$$
X_i \tilde{w}_i^{(k)} = \Big( \nabla g - \nabla g \circ P_{C_i}^g
\Big)\bigg(\nabla g^* \Big(\nabla g (u_0^{(k)}) + X_i\tilde{w}_i^{(k)} 
\Big) \bigg), \quad i=1,\ldots,d,
$$
for a sequence \smash{$\tilde{w}_i^{(k)}$}, $i=1,\ldots,d$, and
$k=1,2,3,\ldots$, related by 
\smash{$z_i^{(k)}=X_i\tilde{w}_i^{(k)}$}, $i=1,\ldots,d$, and
$k=1,2,3,\ldots$.  By Lemma \ref{lem:dyk_cd_update_gen}, the above is
equivalent to
$$
\tilde{w}^{(k)}_i = \argmin_{\tilde{w}_i \in \R^{p_i}} \;
f\Big(-\nabla g (u_0^{(k)}) - X_i\tilde{w}_i^{(k)} + X_i\tilde{w}_i
\Big) + h_i(\tilde{w}_i), \quad i=1,\ldots,d,
$$
Rescaling to 
\smash{$w_i^{(k)}=\tilde{w}_i^{(k)}/d$}, $i=1,\ldots,d$, 
$k=1,2,3,\ldots$, the above displays show that the Dykstra
iterations \eqref{eq:dyk_gen_par} can be rewritten quite simply as: 
\begin{equation}
\label{eq:cd_dyk_gen_par}
w_i^{(k)} = \argmin_{w_i \in \R^{p_i}} \;
f(Xw^{(k)} - dX_iw_i^{(k)} + dX_iw_i) + h_i(dw_i), 
\quad i=1,\ldots,d, 
\end{equation}
for $k=1,2,3,\ldots$, with the initialization being
\smash{$w^{(0)}=0$}. This is our parallel-Dykstra-CD algorithm for
\eqref{eq:reg_gen}.  

Meanwhile, ADMM for the 2-set problem \eqref{eq:bap_gen_prod} 
is defined around the augmented Lagrangian
$$
L(u_0,\ldots,u_d,z_1,\ldots,z_d) = 
d g(u_0) -\langle \nabla g(b), u \rangle 
+ \sum_{i=1}^d \Big( \rho \|u_0-u_i+z_i\|_2^2 + 1_{C_i}(z_i)
\Big). 
$$
Initializing 
\smash{$u_1^{(0)}=\cdots=u_d^{(0)}=0$}, 
\smash{$z_1^{(0)}=\cdots=z_d^{(0)}=0$}, we repeat for
$k=1,2,3,\ldots$: 
\begin{equation}
\begin{aligned}
\label{eq:admm_gen_par}
&u_0^{(k)} = \argmin_{u_0 \in \R^n} \; 
g(u_0) -\langle \nabla g(b), u \rangle + \frac{\rho}{d} \sum_{i=1}^d  
\|u_0-u_i^{(k-1)}+z_i^{(k-1)}\|_2^2, \\
&\begin{rcases*}
u_i^{(k)} = P_{C_i} (u_0^{(k)}+ z_i^{(k-1)}), & \\ 
z_i^{(k)} = z_i^{(k-1)} + u_0^{(k)} - u_i^{(k)}, &  
\end{rcases*}
\quad \text{for $i=1,\ldots,d$}.
\end{aligned}
\end{equation}
Again using $b=-\nabla f(0)$, with $g(v) =f^*(-v)$, we will now
rewrite the above iterations. Precisely as in the connection between 
\eqref{eq:admm_par} and \eqref{eq:cd_admm_par} in the quadratic case,
as discussed in Section \ref{sec:cd_admm_par}, the $u$-updates and
$z$-updates here reduce to 
$$
w_i^{(k)} = \argmin_{\tilde{w}_i \in \R^{p_i}} \;  
\half \Big\|u_0^{(k)} + X_i \tilde{w}_i^{(k-1)} -
X_i \tilde{w}_i\Big\|_2^2 + h_i (\tilde{w}_i)
\quad i=1,\ldots,d,
$$
where \smash{$\tilde{w}_i^{(k)}$}, $i=1,\ldots,d$, 
$k=1,2,3,\ldots$ satisfies 
\smash{$z_i^{(k)}=X_i\tilde{w}_i^{(k)}$}, $i=1,\ldots,d$, 
$k=1,2,3,\ldots$.
The $u_0$-update here is characterized by
$$
\nabla g(u_0^{(k)}) = \frac{\rho}{d} \sum_{i=1}^d 
(u_i^{(k-1)}-z_i^{(k-1)}) -\rho u_0^{(k)},
$$
where we have used $\nabla g(b)=0$, or equivalently,  
$$
u_0^{(k)} = -\nabla f\bigg( 
\rho u_0^{(k)} -
\frac{\rho}{d} \sum_{i=1}^d  
(u_i^{(k-1)}-z_i^{(k-1)})\bigg),
$$
where we have used \smash{$\nabla g^*= (\nabla g)^{-1}$}
and $g^*(v)=f(-v)$, and lastly
$$
u_0^{(k)} = -\nabla f\bigg( 
\rho (u_0^{(k)} - u_0^{(k-1)}) -
\frac{\rho}{d} X(\tilde{w}^{(k-2)}-2\tilde{w}^{(k-1)}) \bigg),
$$
by plugging in the form of the $u$-updates, and recalling 
\smash{$z_i^{(k)}=X_i\tilde{w}_i^{(k)}$}, $i=1,\ldots,d$, 
$k=1,2,3,\ldots$.
Rescaling to \smash{$w_i^{(k)}=(\rho/d)\tilde{w}_i^{(k)}$},
$i=1,\ldots,d$, $k=1,2,3,\ldots$, and collecting the last several 
displays, we have shown that the ADMM
iterations \eqref{eq:admm_gen_par} can be written as:
\begin{equation}
\label{eq:cd_admm_gen_par}
\begin{gathered}
\text{Find $u_0^{(k)}$ such that:} \quad 
u_0^{(k)} = -\nabla f\Big(
\rho (u_0^{(k)} - u_0^{(k-1)}) -
X(\tilde{w}^{(k-2)}-2\tilde{w}^{(k-1)}) \Big), \\
w_i^{(k)} = \argmin_{w_i \in \R^{p_i}} \;  
\half \Big\|u_0^{(k)} + (d/\rho) X_i w_i^{(k-1)} -
(d/\rho) X_i w_i\Big\|_2^2 + h_i \big((d/\rho)w_i\big), 
\quad i=1,\ldots,d,
\end{gathered}
\end{equation}
for $k=1,2,3,\ldots$, where the initializations are
\smash{$u_0^{(0)}=0$}, \smash{$w^{(0)}=0$}.
This is our parallel-ADMM-CD algorithm for \eqref{eq:reg_gen}. 

Fortunately, many losses $f$ of interest in a regularized estimation
problem such as \eqref{eq:reg_gen} are separable and can be
expressed as \smash{$f(v)=\sum_{i=1}^n f_i(v_i)$} for smooth, convex 
$f_i$, $i=1,\ldots,n$.  This means that $\nabla_i f(v)= f_i'(v_i)$,
$i=1,\ldots,n$, and the $u_0$-update in
\eqref{eq:cd_admm_gen_par} reduces to $n$ univariate problems, each of
which can be solved efficiently via a simple bisection search. 

Note the stark contrast between the
parallel-Dykstra-CD iterations \eqref{eq:cd_dyk_gen_par} and
parallel-ADMM-CD iterations 
\eqref{eq:cd_admm_gen_par} in the general nonquadratic loss
case. These two methods are no longer equivalent for $\rho=1$ (or 
for any fixed $\rho$).  In 
\eqref{eq:cd_dyk_gen_par}, we perform (in parallel) a coordinatewise
$h_i$-regularized minimization involving $f$, for $i=1,\ldots,d$. In
\eqref{eq:cd_admm_gen_par}, we perform a single
quadratically-regularized minimization 
involving $f$ for the $u_0$-update, which in general   
may be difficult, but as explained above can be reduced to $n$
univariate minimizations for several losses $f$ of interest.  And for
the 
$w$-update, we perform (in parallel) a coordinatewise
$h_i$-regularized minimization involving a quadratic loss, for
$i=1,\ldots,d$, typically much cheaper than the analogous
minimizations for nonquadratic $f$.    

\bibliographystyle{plainnat}
\bibliography{ryantibs}

\end{document}